\newtheorem{lemma}{Lemma}
\newtheorem{proof}{Proof}
\newtheorem{property}{Property}
\begin{document}
\bstctlcite{IEEEexample:BSTcontrol}

\title{ByteStore: Hybrid Layouts for Main-Memory Column Stores}


\author{Pengfei~Zhang,
		Ziqiang~Feng,
		Eric~Lo,
		Hailin~Qin
\IEEEcompsocitemizethanks{\IEEEcompsocthanksitem 
P. Zhang, Eric Lo, Hailin are/was with the Department of Computer Science and Engineering, Chinese University of Hong Kong, Hong Kong. Z. Feng is with Google. }
}


\IEEEcompsoctitleabstractindextext{
\begin{abstract}
The performance of main memory column stores highly depends on the scan
and lookup operations on the base column layouts. 
Existing column-stores adopt a homogeneous column layout, leading to sub-optimal performance on real
workloads since different columns possess different data characteristics. 
In this paper, we propose ByteStore, a column store that uses different storage layouts for 
different columns. 
We first present a novel data-conscious column layout,
PP-VBS (Prefix-Preserving Variable Byte Slice).
PP-VBS exploits data skew to accelerate scans 
without sacrificing lookup performance. 
Then, we present an experiment-driven column layout advisor to select individual column layouts 
for a workload. 
Extensive experiments on real data show that ByteStore outperforms homogeneous storage engines by up to 5.2$\times$.

\end{abstract}

\begin{IEEEkeywords}
Skew, Scan, SIMD, Column Store, OLAP
\end{IEEEkeywords}
}

\maketitle

\IEEEdisplaynotcompsoctitleabstractindextext

\IEEEpeerreviewmaketitle

\section{Introduction}\label{sec:introduction}

Main-memory column stores are popular for fast analytics of relational data \cite{farber2012sap,monetdbx100,abadi2009column}. By holding the data inside the memory of a server, or the 
aggregated memory of a cluster, these systems eliminate the disk I/O bottleneck and have 
the potential to  unleash the  high performance locked in modern CPU-memory stacks, 
including multiple cores, simultaneous multi-threading (SMT), single-instruction 
multiple-data (SIMD) instruction sets, hierarchical cache, and large DRAM bandwidth.

OLAP workloads are usually read-only.
To fully utilize the power of column-oriented storage, 
denormalization is often used to 
transform tables into one or a few outer-joined wide tables
such that expensive joins and nested queries can then be flattened
as simple scan-based queries on the relevant columns~\cite{li2014widetable, byteslice, li2013bitweaving}.
Under this scan-heavy paradigm, most of the query time is spent on two operations that
directly consume the \emph{base columns}: scan and lookup.
The \emph{scan} operation on a column filters row IDs whose column values
satisfy a predicate (e.g., \texttt{year < 2018}). Given these row IDs, the \emph{lookup}
operation extracts column values into their plain form (e.g., \texttt{int32})
to be consumed by upstream operations, such as sorting~\cite{balkesen2013multi, xuSigmod16}, aggregation~\cite{shatdal1995adaptive}, 
and so on. 
The overall performance of queries thus heavily depends on 
the scan and lookup performance 
on the base columns \cite{byteslice, li2013bitweaving}.

Recently, there has been a flurry of research for in-memory \emph{base column layouts} such as Bit-Packed \cite{willhalm2009simd,simd-scan2}, PE-VBP \cite{li2013bitweaving,paddedEncoding} and ByteSlice \cite{byteslice}, which enable fast column scans.
From the system perspective, 
however, all of them are column stores 
with a \emph{homogeneous} storage layout, i.e., the same storage layout is used across all data columns.
Figure~\ref{taxitrip_per_column} hints the drawback of such a homogeneous approach.
In the figure, it shows the scan performance on the
state-of-the-art main-memory column layouts: 
ByteSlice~\cite{byteslice} and PE-VBP (stands for Padded Encoding Vertical Bit Packing)~\cite{paddedEncoding} on a real dataset \cite{taxitrip} with 23 columns. 
We run the experiments using one core and give performance measurements in terms of processor cycles per column code.
It is clear that no one size fits all: 
PE-VBP, which is skew-aware,
outperforms ByteSlice, which is skew-agnostic, on 9 out of 23 columns.
Meanwhile, ByteSlice outperforms PE-VBP for the rest.
The above motivates us to design ByteStore, a new storage engine for column stores.
ByteStore adopts different storage layouts for different data columns.
ByteStore is different from other hybrid storage engines like HYRISE \cite{grund2010hyrise}.
HYRISE uses a hybrid of row/column storage for HTAP workloads.
By contrast, ByteStore is a pure column store that focuses on OLAP workload 
but it is hybrid in terms of using different encoding and layouts to store the columns.
To our best knowledge, this paper is the first to use different storage layouts for different base columns
in main-memory column stores.

\begin{figure}\centering
\includegraphics[width=1.0\columnwidth]{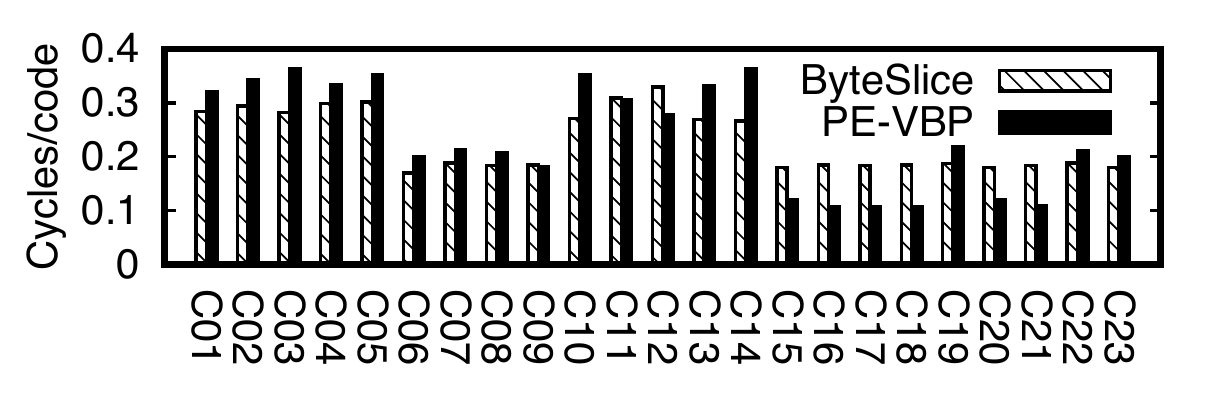}
\caption{Scan Speed on a Real Dataset: ByteSlice vs PE-VBP}
\label{taxitrip_per_column}
\vspace{-0.6cm}
\end{figure}

Technically, 
ByteStore is beyond a simple integration of ByteSlice and PE-VBP --- the two most efficient storage layouts to date for non-skewed and skewed data, respectively. 
First, ByteStore abandons PE-VBP but uses a new storage layout to complement with ByteSlice instead.
The problem of PE-VBP is that 
it actually has poor lookup performance
because of its way of storing the bits of a data value in
scattered memory locations.
Furthermore, 
PE-VBP indeed has limited performance edge over ByteSlice when scanning skewed data 
because its way of encoding a column is 
\emph {prefix-free}, making almost all values are 
encoded with a longer code length.
Therefore, one contribution of this paper is PP-VBS (stands for Prefix Preserving Variable Byte Slice), 
a new storage layout that uses byte as the storage unit
and a new \emph{prefix-preserving} method to encode the skewed columns.
Unlike ByteSlice, PP-VBS is skew aware.
Different from PE-VBP,
PP-VBS leverages data skewness 
to accelerate scans without hurting lookup performance.

With the advent of PP-VBS, we have observed that ByteSlice layout dominates scan and lookup performance on uniform to lightly skewed data columns and 
PP-VBS dominates the rest.
Therefore, ByteStore has to 
make a binary decision between the two storage layouts for a given data column.
Although the decision boundary looks complicated 
--- ByteSlice and PP-VBS outperform each other based on an array of factors such as the workload, 
value distribution and domain sizes,
we observe that there exists a simple yet reliable decision boundary based solely on data skewness.
Therefore, another contribution of this paper 
is an experiment-driven column-layout-advisor based on that observation.

Works on main-memory analytics mostly are evaluated on synthetic data (e.g., TPC-H) \cite{li2014widetable,byteslice, li2013bitweaving, xuSigmod16, paddedEncoding, imprint, columnSketches}.
Our last contribution is a comprehensive experimental study 
based on not only TPC-H but also six open datasets and workloads. 
Experiments show that ByteStore outperforms any homogeneous storage engines.
It therefore validates the effectiveness of having hybrid data layouts at column level.
%
A preliminary version of this paper appears in \cite{byteslice}, in which only 
ByteSlice was discussed and evaluated.
In this version, we discusss
a full-fledged hybrid storage engine 
with ByteSlice as one of its components. 

The remainder of this paper proceeds as follows: Section \ref{sec:background} contains necessary background information;  Section \ref{sec:SkewScan} presents the new storage layout PP-VBS; Section \ref{section:cla} presents the column-layout advisor; Section \ref{sec:experiments} presents the experimental results. Section \ref{sec:related} discusses related works; Section \ref{sec:conclusion} gives the conclusion.

\section{Background and Preliminary}\label{sec:background}

\subsection{SIMD Instructions}
\emph{Data-level parallelism} is one strong level of parallelism supported by modern processors. 
Such parallelism is supported by SIMD (single instruction multiple data) instructions, which interact with $S$-bit SIMD registers as a \emph{vector of banks}. A bank is a continuous section of $b$ bits. 
In AVX2, $S=256$ and b is $8$, $16$, $32$ and $64$. 
We adopt these values in this paper since AVX2 is the most widely available in server processors (e.g., from Intel Haswell to TigerLake and AMD), but remark that our techniques can be straightforwardly extended to AVX-512 model.
The choice of $b$, the \emph{bank width}, is on per instruction basis. 
A SIMD instruction carries out the same operation on the vector of banks simultaneously. 
For example, the {\tt \_mm256\_add\_epi32()} instruction performs an 8-way addition between two SIMD registers, which adds eight pairs of 32-bit integers simultaneously. 
Similarly, the {\tt \_mm256\_add\_epi16()} instruction performs 16-way addition between two SIMD registers, which adds sixteen pairs of 16-bit short integers simultaneously. The degree of \emph{data-level} parallelism is $S/b$.
\subsection{Scan-based OLAP Framework}

Modern analytical column stores
transform complex queries into scan-heavy queries on denormalized wide
tables~\cite{li2014widetable,byteslice,li2013bitweaving}. These queries typically have extensive 
\texttt{WHERE} clauses requiring scan on many columns. 
A (column-scalar) scan takes as input a dictionary-encoded column, and a predicate of
types $=, \ne, >, <, \le, \ge$, {\tt BETWEEN}. The scalar literals in the predicates
(e.g., 2018 in \texttt{WHERE year < 2018}) are encoded using the same dictionary to 
encode the column values. By using order-preserving encoding, comparison on
codes yields correct result for comparison of the original column values. 
For predicates involving arithmetic or similarity search (e.g., {\tt LIKE} predicates on strings), codes have to be decoded before a scan is evaluated in the traditional way. 

The scan operation filters all matching column codes, and outputs a
\emph{result bit vector} to indicate the matching row IDs. The bit vector makes it
easy to combine scan results in logical expression (conjunction or disjunction),
and handle NULL values and three-valued Boolean logic~\cite{li2013bitweaving}.
All prior works on column scan~\cite{li2014widetable,byteslice,li2013bitweaving,paddedEncoding,willhalm2009simd,simd-scan2} support this bit vector interface,
so we follow the same convention in this paper.
After scan, column codes involved in projection, aggregation or sorting
may need to be retrieved and reconstructed into 
their canonical forms  (e.g., \texttt{int32}). 
This is called lookup.
The lookup operation takes as input an encoded column and a result bit vector from
a scan. 
It then outputs the values as an array (e.g., \texttt{int32[]}).

Under this framework, scan and lookup are the two major operations 
whose performance directly depend on the base columns' layouts. 
Other operations such as sorting and aggregation are independent of the base columns.
\begin{figure}\centering
\includegraphics[width=1\columnwidth]{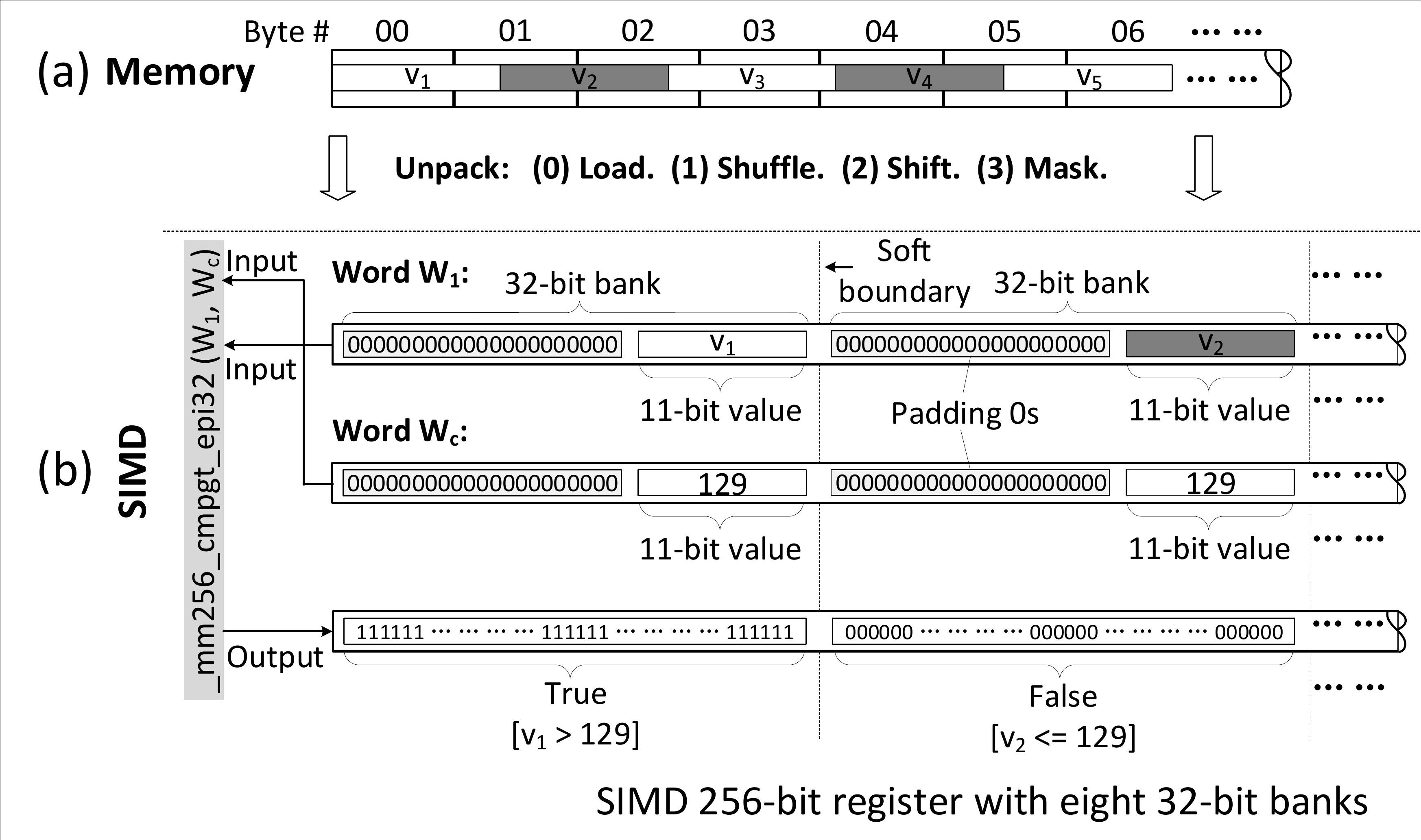}
\caption{(a) 11-bit column values stored under bit-packed layout (b) Scan on bit-packed data with predicate $v>129$}
\label{fig:bit-packed}
\vspace{-0.4cm}
\end{figure} 

\subsection{Bit-Packed}

The Bit-Packed layout \cite{willhalm2009simd,simd-scan2} aims to minimize the memory bandwidth usage when processing data. 
Same as ByteSlice, Bit-Packed is skew-agnostic and codes are fixed-length.
Figure \ref{fig:bit-packed}(a) shows an example with 11-bit column codes. Codes are tightly packed together in the memory, ignoring any byte boundaries. To be specific, the first code $v_1$ is put in the 1-st to 11-th bits whereas the second code $v_2$ is put in the 12-th to 22-nd bits and so on.

To evaluate a scan on a bit-packed column, it is necessary to unpack the tightly packed data into SIMD registers. 
Since a code may initially span 3 bytes (e.g., $v_3$) as shown in Figure \ref{fig:bit-packed}(a), each code has to be unpacked into a 32-bit bank of the SIMD register.  
Under AVX2 architecture (i.e., the length of SIMD registers is 256-bit), scan is run in 8-way (256/32) data level parallelism. 
In other words, eight 11-bit codes (e.g., $v_1\sim v_8$) are loaded from memory and aligned into eight 32-bit banks of the register. 
After unpacking, data in the SIMD register (e.g., $W_1$ in Figure \ref{fig:bit-packed}(b)) is ready to be processed by scan operation.
Figures \ref{fig:bit-packed}(b) shows how to evaluate a predicate $v > 129$ on the unpacked codes with AVX2's 8-way greater-than comparison instruction {\tt \_mm256\_cmpgt\_epi32()}. 
After that, the scan starts another iteration to unpack and compare the next 8 codes with $W_c$. 
In the example above, although 8-way parallelism is achieved in data processing, many cycles are actually wasted during unpacking. To align the 8 codes into the eight 32-bit banks, three extra SIMD instructions (i.e., \emph{Shuffle}, \emph{Shift} and \emph{Mask}) are carried out. 
Furthermore, as 0's are used to pad up with the SIMD soft boundaries, for above example, any data processing operation is wasting $(32-11)\times 8 = 168$ bits of computation power per cycle.

To retrieve a code $v_i$ from bit-packed layout, one has to gather all bytes that it spans.
For example, to look up $v_3$, Bytes\# 02$\sim$04 are fetched from the memory. 
As a code may span multiple bytes under the bit-packed format, retrieving one code may incur multiple cache misses, particularly when those bytes span across multiple cache lines.

\subsection{PE-VBP}




\begin{figure}\centering
\includegraphics[width=0.8\columnwidth]{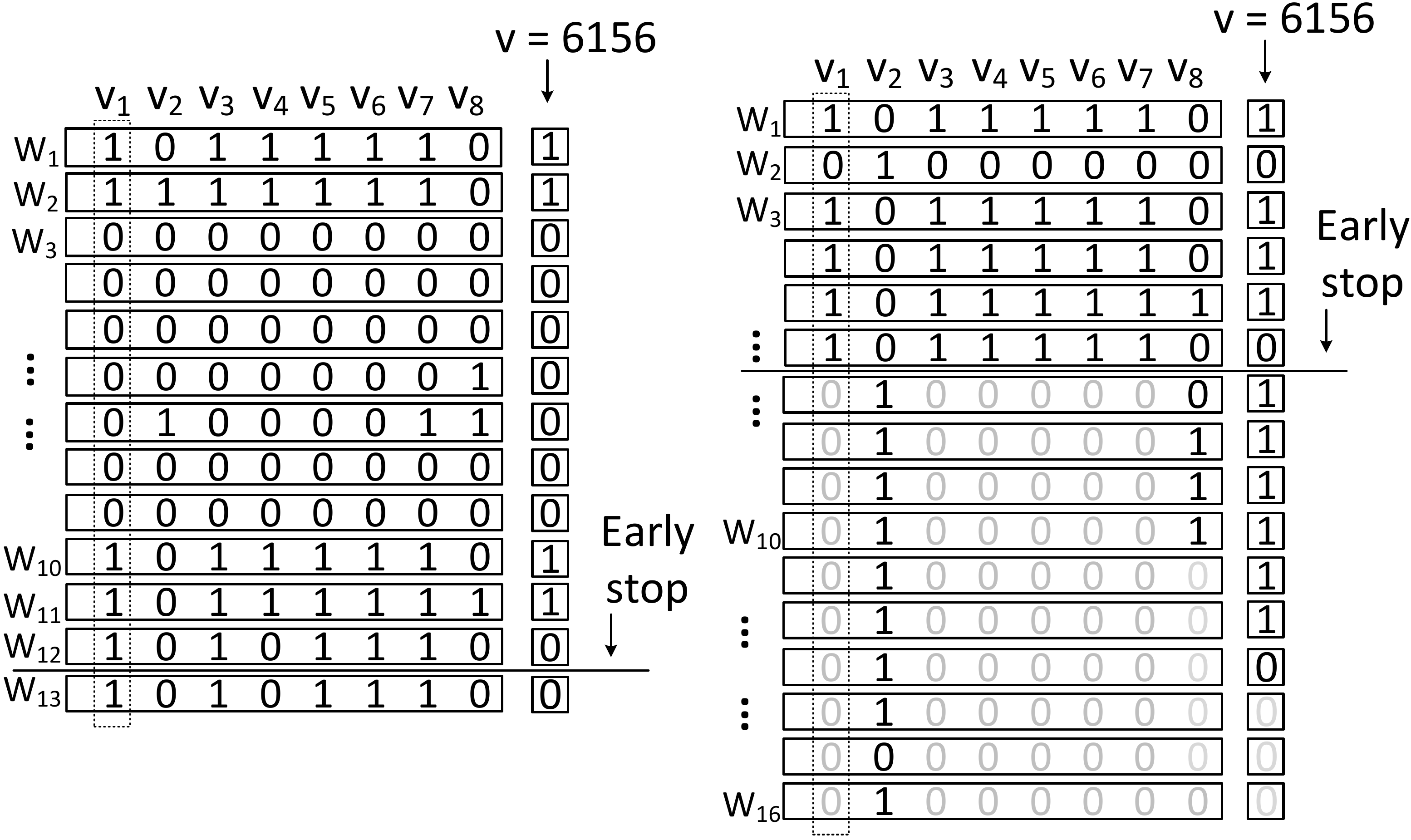}
\caption{(a) VBP and (b) PE-VBP}
\label{fig:pe-vbp}
\vspace{-0.4cm}
\end{figure}

Li et al.~\cite{paddedEncoding} proposed
a variable-length encoding scheme called \emph{Padded Encoding} (PE) that leverages 
data skew to accelerate scan. They combine PE with a vertical bit packing (VBP) storage layout~\cite{li2013bitweaving} to form PE-VBP, a skew-aware column scan technique.
Figure~\ref{fig:pe-vbp}(a) shows how VBP (without PE) stores a block of eight 13-bit column codes 
$v_1 \sim v_8$ in memory. Each horizontal box $W_i$ is a contiguous memory region.
The $j$-th bits of the codes are stored in $W_j$. In other words, the bits belonging 
to $v_1=(1100000001111)_2$ are vertically distributed across different memory regions (i.e., $W_1\sim W_{13}$).

The scan on VBP leverages a key insight that within each processor there is  abundant ``intra-cycle parallelism" as the processor's ALU operates on multiple bits 
in parallel. 
To illustrate, consider the evaluation of predicate ``{\tt v = 6156}" 
(see Figure \ref{fig:pe-vbp}(a)).
The scan algorithm compares the bits of $v_i$ with the bits of 6156 in a bit-by-bit
fashion, from the most significant bit to the least significant bit. Since the first
bits of $v_1 \sim v_8$ are stored together in $W_1$, they can be loaded into the CPU 
as a single memory word and processed in parallel using bitwise instructions
(in real implementation, a memory word is usually a SIMD register, e.g., 256 bits).
It compares the first bit of 6156 --- $(1)_2$ --- with all bits in 
$W_1 = (10111110)_2$. At this point, only the $v_2$ and $v_8$ are guaranteed to fail
the predicate because their first bit is 0. The other six codes are inconclusive.
Thus, the scan has to continue to the next (second) bit, and repeats.
After scanning the 12-th bits (i.e., $W_{12}$), 
there are no codes whose first 12 bits match 6156's first 12 bits,
 safely declaring all codes in this block (i.e., $v_1\sim v_8$) fail the predicate.
There is no need to load $W_{13}$ into the CPU and process it.
In this case, the scan is said to \emph{stop early} and can proceed to the next
block of codes.

PE-VBP builds on VBP by leveraging 
data skew to increase the chance of early stop during a scan.\footnote{PE-VBP also considers skew in the predicate literals of a workload.
For example, if a predicate literal (e.g., the value 6156 in the predicate
``{\tt v = 6156}") is known and remains constant in all instantiations 
of a query template, PE-VBP will also encode those predicate-skewed values using fewer bits.  However, query literals rarely remain constant 
in real workloads.  
Therefore, we do not consider this type of skew in this paper.
}
Similar to Huffman encoding, PE assigns shorter codes for frequent values and longer 
codes for infrequent values. To store PE-encoded values in VBP, shorter codes are
padded with zeros to align with the longest code. Figure~\ref{fig:pe-vbp}(b) shows how
PE-VBP encodes and stores the same eight column values, where the grey bits are padding zeros.
It is worth noting that $v_1\sim v_8$ are encoded differently from VBP in Figure~\ref{fig:pe-vbp}(a).
The encoding scheme in PE-VBP is \emph{prefix-free}, which guarantees that if $a \ne b$, 
then the encoded value (called \emph{code}) of $a$ cannot be a prefix of that of $b$, or vice versa. This property
increases the maximum length of PE-encoded codes, for example, from 13 to 16 bits in 
Figure~\ref{fig:pe-vbp}.

Despite using more bits, PE-VBP achieves faster scan than VBP when the data is skewed.
Figure~\ref{fig:pe-vbp}(b) shows the scan can now stop early at the 6-th bits (i.e., $W_6$),
about $2\times$ faster than VBP,
because frequent values $v_1, v_3, v_4, v_5, v_6, v_7$ are encoded with only 6 bits,
as opposed to 13 bits in VBP. In general, PE-VBP increases the likelihood of early
stop after scanning each bit. Similar to many existing work~\cite{knuth97,paddedEncoding, christodoulakis1981,skew_index},  PE-VBP uses the Zipf distribution to model skewed data.


Unfortunately, both VBP and PE-VBP suffer from very expensive lookup operation.
When reconstructing a column code, both VBP and PE-VBP must retrieve every single bit of a code from a different memory word. 
Each bit is likely to reside in a different cache line,
and incur a cache miss, costing hundreds of CPU cycles. 
As shown in \cite{byteslice}, the expensive lookup often offsets 
the performance gain of fast scan from the whole query point of view.
The problem is exacerbated in PE-VBP 
since 
the storage layout of PE-VBP is intrinsically at odds with its variable-length encoding scheme. 
As shown in Figure~\ref{fig:pe-vbp}(b), 
all short codes have to be padded up with zeros to align with the 
longest code. 
It retrieves 16 bits instead of 13 bits in VBP (see Figure \ref{fig:pe-vbp}(a)) to re-build a single code.

\subsection{ByteSlice}

\begin{figure}\centering
\includegraphics[width=0.9\columnwidth]{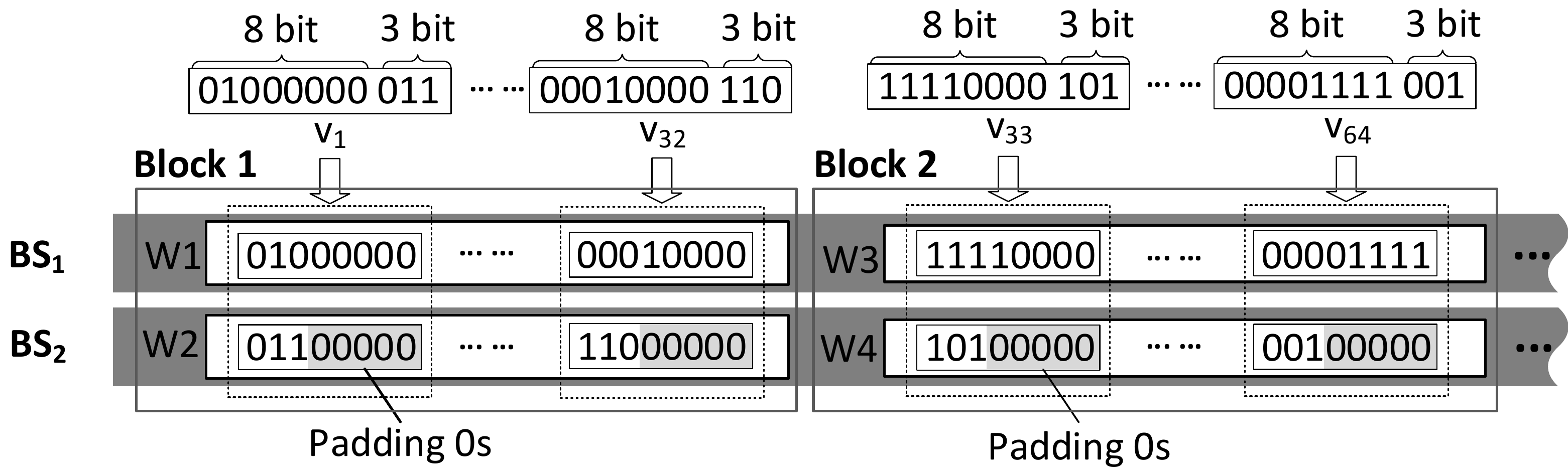}
\caption{ByteSlice}
\label{fig:byteslice}
\vspace{-0.4cm}
\end{figure}

ByteSlice is a state-of-the-art technique that achieves balanced high performance
of scan and lookup.
Figure~\ref{fig:byteslice} shows how ByteSlice~\cite{byteslice} arranges 64 11-bit column codes $v_1 \sim v_{64}$ in main memory. Each code is chopped into separate bytes. 
The $j$-th byte of all codes are stored in a continuous memory region. 
In other words, the bytes of a code  $v_i$ are vertically distributed in different memory regions. Each of these regions is called a byte slice.
At query time, a 256-bit memory word such as $W_1$ is loaded into the CPU 
and processed with byte-level AVX2 SIMD instructions, achieving 32-way parallelism.
Similar to VBP, ByteSlice enjoys the benefit of early stop during scan.
For example, the scan on Block 1 ($v_1 \sim v_{32}$) in Figure~\ref{fig:byteslice}
may stop early after processing $W_1$, without needing to load and process $W_2$.

ByteSlice is also efficient at lookup. It distributes a $k$-bit code across 
$\lceil k/8\rceil$ memory words. In Figure~\ref{fig:byteslice}, a lookup on $v_i$
will incur at most 2 memory accesses, as opposed to 11 in VBP. The number 
$\lceil k/8\rceil$ is typically small (between 1 to 3), so it can be overlapped with
other instructions in the CPU's instruction pipeline~\cite{byteslice}.
Nonetheless, ByteSlice has not exploited any data skew found in real data. 
All codes
in Figure~\ref{fig:byteslice} are encoded into two bytes regardless.

\section{PP-VBS} \label{sec:SkewScan}







PE-VBP leverages data skew in real data
but suffers from poor lookup performance.
ByteSlice has excellent balance between scan and lookup
but has not leveraged data skew to accelerate its operations.
In this section, we present PP-VBS (Prefix Preserving Variable Byte Slice).
PP-VBS aims 
to leverage data skew to accelerate scans without jeopardizing the efficiency of lookups.
Similar to ByteSlice and PE-VBP, 
PP-VBS
is a suite of techniques to 
encode column values into integer codes
(Section~\ref{sec:skewscan:vbe}), store the codes' bytes in main memory using 
a specialized layout
(Section~\ref{sec:skewscan:vbs}), 
and perform efficient scan and lookup operations
on top (Section~\ref{sec:skewscan:qp}).

\subsection{Prefix Preserving Encoding (PPE)}
\label{sec:skewscan:vbe}
We begin by arguing that 
the \emph{prefix-free} property, commonly found in
variable-length encoding \cite{huffman,informationTheory,optimal_search_tree,minimal_binary_tree} and used in PE-VBP, is actually unnecessary 
in the context of main-memory data processing
but subverts the performance of both scans and lookups.
Traditionally, variable-length encoding schemes were designed for reducing the 
communication cost when sending a message across the network \cite{informationTheory}. 
In that context, codes are often concatenated as a sequential byte stream, so the
prefix-free property is crucial for ensuring the message can be decoded without
ambiguity. 
Concretely, a prefix-free encoding scheme is usually constructed by building a \emph{frequency-sorted tree} \cite{huffman}. 
Figure \ref{fig:huffmantree} shows one such example to encode 13 distinct letters
of varying frequencies.
This tree encodes the values in multiples of two bits, thus having a fan-out of four.
The numbers on top of the letters are their frequencies. The bits in the dashed boxes or circles are partial codes. A letter's code is the concatenation of the bits from
the root to its slot. For example, `C' $= (0010)_2$, `K' $= (10)_2$.
When building the tree, values of higher frequencies are assigned to higher slots,
thus shorter codes. Scarce values are assigned to lower slots and thus longer codes.
The prefix-free property stipulates that a (shorter) code cannot be a prefix of 
another (longer) code. For example, we cannot use $(\underline{10}01)_2$ to encode a letter because $(10)_2$ is already assigned to `K'.
So, in Figure~\ref{fig:huffmantree}, a slot that
contains a letter cannot have a pointer to a sub-tree at the same time, nor vice versa. 

Code lengths have direct impact on both scan and lookup performance.
Firstly, longer codes generally imply more iterations and more instructions during scan.
Second, it also implies longer lookup time.
By contrast, \emph{prefix-preserving} encoding (PPE) generally produce shorter codes on average.
Figure~\ref{fig:4way} shows a prefix-preserving example for the same data in 
Figure~\ref{fig:huffmantree}. These are still variable-length codes, e.g,. 
`A'=$(0001)_2$, `C'=$(01)_2$.
A slot can both contain a letter and have a pointer to a sub tree.
Note that `C'=$(01)_2$ does \emph{not} prevent other codes using it as the prefix, e.g.,
`E'=$(\underline{01}10)_2$. 
In general, PPE achieves smaller frequency-weighted \emph{average code length}
than prefix-free.
The average code length in Figure~\ref{fig:4way} is 2.31 bits as opposed to 3.19 bits in Figure~\ref{fig:huffmantree}. 
Even though the PPE trees are not always as well balanced as this example, 
prefix-preserving still allows more values to use short codes
and fewer values to use long codes.
That in turn allows more values to leverage data skew to accelerate scans
and fewer values to suffer from inefficient lookup.
The only question is how to deal with ambiguity
for codes that share the same prefix. 
Fortunately, main-memory storage and communication are two very different problems.
We can actually store some lightweight disambiguation information elsewhere in
the storage level (Section \ref{sec:skewscan:vbs}).


\begin{figure}
\centering
\includegraphics[width=0.7\columnwidth]{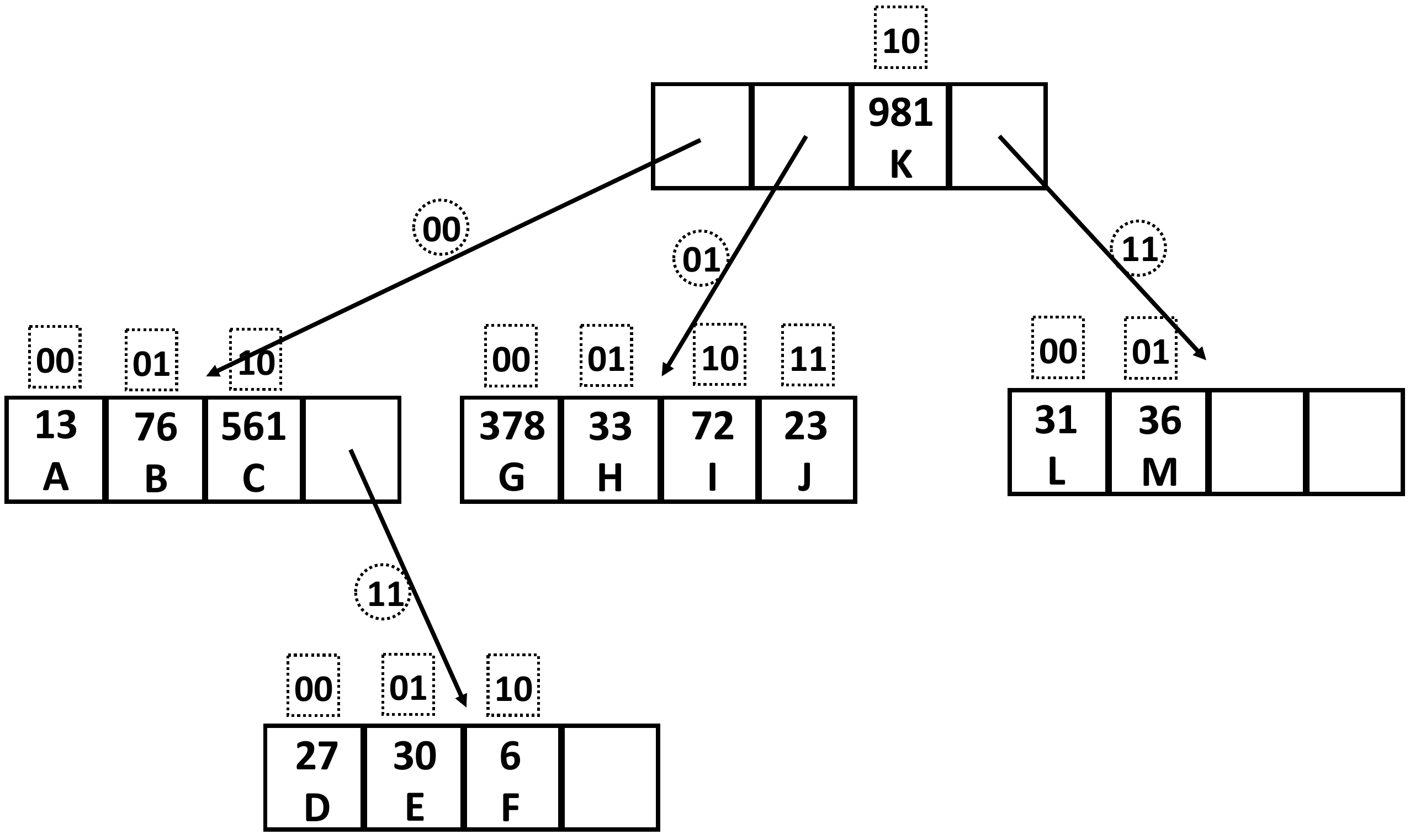}

\caption{Prefix-free Encoding Tree for Values `A' to `M'}
\label{fig:huffmantree}
\vspace{-0.4cm}
\end{figure}

\begin{figure}
\centering
\includegraphics[width=0.7\columnwidth]{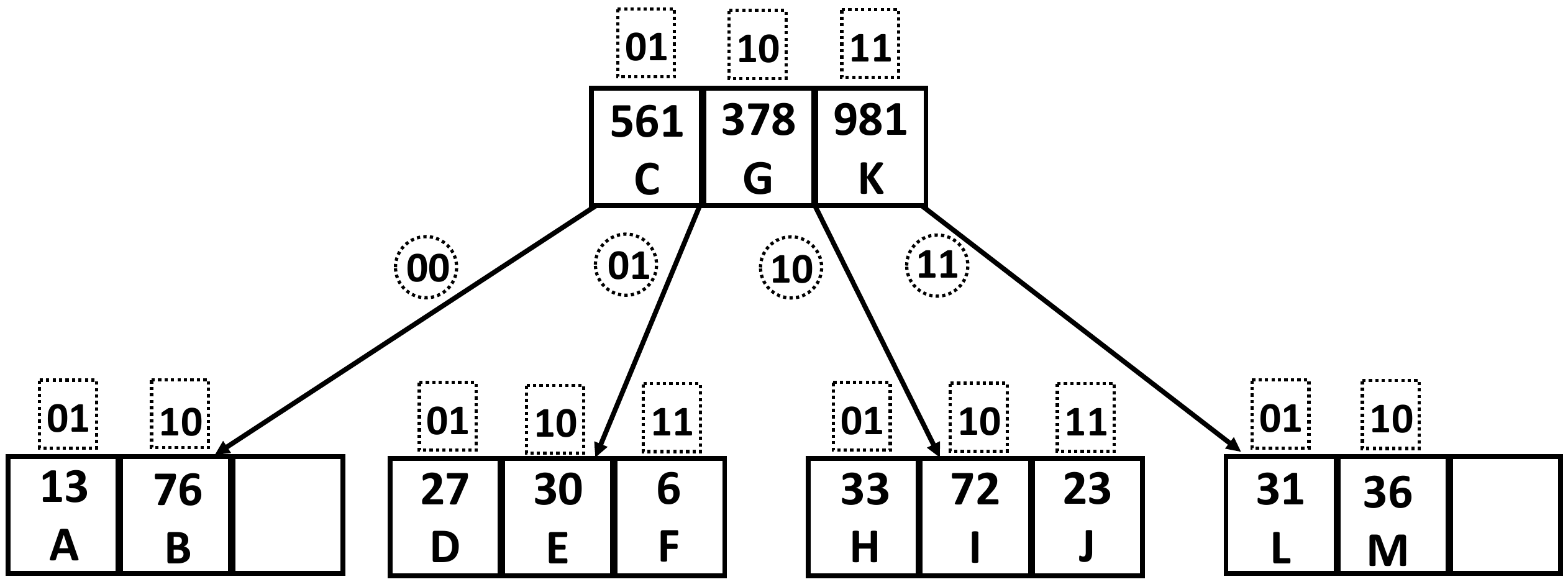}
\caption{Prefix-preserving Encoding Tree for Value `A' to `M'.}
\label{fig:4way}
\vspace{-0.4cm}
\end{figure}

PPE is based on building a prefix-preserving $M$-way encoding tree.
In Figure \ref{fig:4way}, $M=4$.
Each node in the tree contains $M-1$ slots. Each slot from the left to the right represents
a sub-code from $1$ to $M-1$, e.g., from $(01)_2$ to $(11)_2$ in Figure \ref{fig:4way}.
We call it as \emph{slot sub-code}, marked in a dashed box.
Each non-leaf node has $M$ pointers connecting to $M$ child nodes. Each pointer from the 
left to the right carries a sub-code from $0$ to $M-1$, e.g., from $(00)_2$ to $(11)_2$ in
Figure \ref{fig:4way}. We call it \emph{pointer sub-code}, marked in a dashed circle.
Note that slot sub-codes do
not start from 0 for the sake of disambiguation, as explained later.

Given an encoding tree, we obtain the encoding of a value $v$ by concatenating all
pointer sub-codes on the path from the root to $v$, and $v$'s own slot sub-code.
In Figure \ref{fig:4way}, the code of value `C' is simply its slot 
sub-code $(01)_2$, since it is in the root node; the code of `A' is $(0001)_2$, which is formed
by concatenation of pointer sub-code $(00)_2$ and slot sub-code $(01)_2$. 
Figure~\ref{fig:4way} also illustrates the intuition of our encoding method: we encode high-frequency
values (`C', `G', `K') with shorter codes, and low-frequency values with longer codes.

\subsubsection{\small{\bf Encoding Numerical Columns (Order-preserving)}}

In addition to being prefix-preserving, the encoding tree in Figure~\ref{fig:4way}
is also \emph{order-preserving}. That is, the tree's
in-order traversal gives the ordered sequence of the encoded values.
It enables the ordinal 
comparison of two column values (e.g., `A' vs. `C') by comparing their codes directly
(e.g., $(0001)_2$ vs. $(01)_2$). When comparing two integer codes of different lengths,
we treat them \emph{as if} the shorter code were padded zeros at the end to align with the 
longer one. Thus, $(0001)_2 < (01\underline{00})_2 \Leftrightarrow$   `A' $<$ `C'.
For this reason, we do not use 0 as slot sub-codes. Otherwise, it cannot differentiate
between a short code with padding 
$(\underset{slot}{\underline{01}}\phantom{0}\underset{pad}{\underline{00}})_2$ and a long code
$(\underset{pointer}{\underline{01}}\phantom{0}\underset{slot}{\underline{00}})_2$. 
Notice this is different from prefix-free, as there are still prefix-sharing codes
(e.g., `C' and `E'). Order-preserving is crucial for numerical columns as it enables
efficient evaluation of all types of predicates ($=,\neq, <, >, \leq, \geq$) 
without decoding codes into their represented values.

\begin{algorithm}[h]
\footnotesize
\caption{PPE for Numerical Columns}
\label{algo:vbeNumerical}
	\KwIn{ $n$ distinct values $A$ = $\{A_0 < A_1 < ... < A_{n-1}\}$, value weights $Q = \{Q_0, Q_1, ..., Q_{n-1}\}$. 
         }
	\KwOut{$n$ value codes $C$ = $\{C_0, C_1, ..., C_{n-1}\}$, code lengths $L$ = $\{L_0, L_1, ..., L_{n-1}\}$.}
	
	\SetKwFunction{VBE}{PPE\_NUMERICAL} 
	\SetKwProg{Fn}{Function}{:}{} 
    
    \tcp{\scriptsize{$b$: the level of the parent node}}
    \tcp{\scriptsize{$s$: the start index}}
    \tcp{\scriptsize{$e$: the end index}}
	\Fn{\VBE{$C$, $L$, $b$, $s$, $e$}}
	{
	   \tcp{\scriptsize{two terminating conditions}} 
    	\If{$e - s < 256$ \textbf{OR} $b \ge 2$}  
    	{
    	    $\beta = \lceil \log_{256}{(e - s + 1)} \rceil$; \\
    	    \For{$i = s$, ..., $e-1$}
    	    {
    	        $C_i = (C_i \ll \beta * 8) + 1 + (i - s)$; \newline
    	        $L_i = b + \beta$; 
    	    }
    	}
   	\Else{
        	Find the largest 255 frequencies in $Q_s\sim Q_{e-1}$, store their indices in an ascending array $I$. \\
        	\For{$t = 0$, ..., $254$}{
       	    $L_{I_t} = b + 1$; \\
        	    $C_{I_t} = (C_{I_t} \ll 8) + t + 1$ \\
        	}
        	\For{$t = 0$, ..., $253$}
        	{
        	    \For{$i = I_{t}+1$, ..., $I_{t+1}-1$}
        	   {
        	        $C_i = (C_i \ll 8) + t + 1$; \\
        	    }
        	    \Call{\VBE}{$C$, $L$, $b+1$, $I_t+1$, $I_{t+1}$}; \\
        	}
        	\tcp{\scriptsize{Deal with the leftmost sub-tree}}
        	\For{$i$ = $s$, ..., $I_0$ - $1$}
        	{
        	    $C_i$ = $C_i$ $\ll$ $8$;
        	}
        	\Call{\VBE}{$C$, $L$, $b+1$, $s$, $I_0$}; \\
         	\tcp{\scriptsize{Deal with the rightmost sub-tree}} 
        	\For{$i$ = $I_{254}+1$, ..., $e-1$}
        	{
        	    $C_i$ = ($C_i$ $\ll$ 8) + 255; \\
        	}
        	\Call{\VBE}{$C$, $L$, $b+1$, $I_{254}+1$, $e$}; \\
   	}
    }
    \textbf{for}($i = 0$, ..., $n-1$): $C_i = 0$; \\
	\textbf{for}($i = 0$, ..., $n-1$): $L_i = 0$; \\
	\Call{\VBE}{$C$, $L$, $0$, $0$, $n$}; \\

\end{algorithm}

To ensure that we do not underutilize the available short codes, all slots of each non-leaf node in a prefix-preserving encoding tree should be filled with higher-frequency values.
For example, the root node in Figure \ref{fig:4way} is full of values. 
Therefore, we build the encoding tree in a depth-first search manner and fill all non-leaf nodes in the paths from the root node to leaf nodes.

Without loss of generality, given a sequence of distinct values in ascending order 
$A_0 < A_1 < \cdots < A_{n-1}$, and their corresponding frequencies 
$Q = \{Q_0, Q_1, \cdots, Q_{n-1}\}$, 
Algorithm \ref{algo:vbeNumerical} shows the pseudocode of constructing a prefix-preserving encoding (PPE) for numerical columns.
The outputs of Algorithm \ref{algo:vbeNumerical} are the codes $C$ and the corresponding code lengths $L$. 
Algorithm \ref{algo:vbeNumerical} builds a 256-way tree, here $M$ equals to 256(=$2^8$), since the code lengths are multiple of 1 byte (8 bits) in our setting.
At first, we initialize the elements of $C$ and $L$ as $0$'s (Lines 22-23).
Then Algorithm \ref{algo:vbeNumerical} invokes a function \texttt{PPE\_NUMERICAL} 
recursively to construct the encoding tree. 
Among the parameters passed to \texttt{PPE\_NUMERICAL}, $s$ and $e$ represent the \emph{start} and \emph{end} index, which indicate that values $A_s \sim A_{e-1}$ will be encoded in 
a sub-tree by this call; $b$ indicates the level of its parent node.
At the first round, \texttt{PPE\_NUMERICAL} selects the 255 largest frequencies from $Q_0 \sim Q_{n-1}$ and stores their corresponding array indices in an ascending array $I$ (Line 8), i.e., $I_0 < I_1 < ... < I_{254}$. 
These most frequent 255 values will reside in the root node and be encoded as 1-byte codes (Lines 9-11), i.e., $(1)_{10} \sim (255)_{10}$. 
This ensures the most frequent values are encoded with the shortest 1-byte codes.
After that, the values between adjacent slots in the root node will be encoded by
building a sub-tree, e.g., the values in the positions [$I_1+1, I_2$) of the sequence $A$, by calling 
\texttt{PPE\_NUMERICAL} recursively (Lines 12-15). 
Note the extra efforts to deal with the leftmost and rightmost sub-trees (Lines 16-21).
Since a non-leaf node must be full before we create any child
nodes by recursion, we can see:
\begin{property}
All (non-leaf) internal nodes in a \emph{prefix-preserving encoding} tree are full.
\label{p:full}
\end{property}

The recursion has two terminating conditions (Line 2):

\noindent
[{\bf $e - s < 256$}]: It means a leaf node has enough slots to hold all the values $A_s \sim A_{e-1}$ pending to be encoded.
Then their slot sub-code will be assigned $(1)_{10} \sim (e-s)_{10}$ (Line 5), and their code length will be $b+1$ bytes (Line 6), where $b$ is the level of its parent node.
There is no need to compare the frequencies of these values.
 
\noindent
[{\bf $b \ge 2$}]: 
Under some extreme column distributions, the encoding tree will become unbalanced, which has a negative impact on scan and lookup performance.
In order to reduce the maximum code length and alleviate the tree unbalance, after 2 levels, 
the values in the positions [$s,e$) are packed into one leaf node even though the number of values is larger than $255$ (Lines 3-6). 
Since in real-world datasets, the number of distinct values of columns
(aka domain size) is mainly in the range of ($2^7, 2^{16}$)~\cite{li2013bitweaving,willhalm2009simd}, we believe that the value $2$ is a reasonable threshold.


\subsubsection{{\small{\mbox{\bf Encoding Categorical Columns (Non Order-preserving)}}}}

While all previous works achieve order-preserving for all columns, 
we observe extra opportunity by forgoing this property for categorical columns.
Typically, categorical columns (e.g, city name, brand name, color) are only involved 
in equality predicates ($=, \neq$), but not range predicates ($<, >, \le, \ge$). 
By omitting the need to handle range predicates, we can build an encoding tree
that is more balanced and remove exceptionally long codes.

Given a sequence of distinct values $A = \{A_0, A_1, ..., A_{n-1}\}$ with descending 
frequencies $Q = \{Q_0 > Q_1 > ... > Q_{n-1}\}$, 
Algorithm \ref{algo:vbeCategorical} shows the pseudocode of PPE for categorical columns. 
Similarly, the outputs of Algorithm \ref{algo:vbeCategorical} are the codes $C$ and corresponding code lengths $L$. 
The algorithm builds the encoding tree in a breadth-first search manner.
Intuitively, it starts by assigning the 255 most frequent values to the top-level
slots (i.e., root node). Then, it assigns the next $256\times 255$ 
(pointers$\times$slots) most frequent values to the second level. 
If there are still values to  encode, it assigns them to the third level, 
for up to $256^2\times 255$ (pointers$^2\times$slots), and so forth.
Therefore, Algorithm~\ref{algo:vbeCategorical} bounds the height of the tree
by a balanced tree that can hold all values (Line 1).
It first fills in the non-leaf nodes level-by-level
(Lines 3--8). When filling the possibly non-full last level, it evenly distributes 
the remaining values among the last level nodes to maximize entropy (Lines 9--15).
The order does not matter between codes, assuming no range comparison 
is needed in the workload. 
Property~\ref{p:full} holds true for categorical columns. In addition:

\begin{property}
A PPE tree for categorical columns is balanced.
\end{property}


\begin{algorithm}[]

\footnotesize
\caption{PPE for Categorical Columns}
\label{algo:vbeCategorical}
	\KwIn{  
	$n$ distinct values $A = \{A_0, A_1,..., A_{n-1}\}$,  value weights  $Q = \{Q_0 > Q_1 > ... > Q_{n-1}\}$.
	}
	\KwOut{
	$n$ value codes $C = \{C_0, C_1,..., C_{n-1}\}$, code lengths $L = \{L_1, L_2,...,L_{n-1}\}$.
	}
	
    \tcp{\scriptsize{B: the number of the levels of the balanced tree}}
	$B = \lceil \log_{256}(n+1) \rceil$; \\
	$\alpha = 0$; \\
	\For{$b$ = $1,...,B-1$}
	{
		\For{$c = 0,...,256^{b-1}-1$}
		{
			\For{$i=1,...,255$}
			{
				$C_{\alpha}$ = $c \ll 8$ + $i$; \\
				$L_{\alpha}$ = $b$; \\
				$\alpha$ += $1$;\\
			}
		}
	}
	\For{$i=1,...,255$}
	{
		\For{$j=0,...,256^{B-1} - 1$}
		{
			$c_{\alpha}$ = $j \ll 8 + i$;\\
			$L_{\alpha}$ = $B$; \\
			$\alpha$ += $1$;\\
			\If{$\alpha == n$}
			{
				exit(0);\\
			}
		}
	}
\end{algorithm}

{\bf String columns}: It is worth noting that certain string columns will encounter range predicates (e.g.,$<$,$>$) in analytical queries. 
We regard such string columns as semi-categorical and treat them as same as the numerical columns. 
In other words, we construct an order-preserving PPE for a semi-categorical column using Algorithm \ref{algo:vbeNumerical}. 
In addition, since it is impossible to carry out a similarity search ({\tt LIKE} predicate) on the codes directly, in our implementation, the codes are decoded to original strings before a similarity search is evaluated in the traditional way.

\subsection{Variable Byte Slice (VBS)}
\label{sec:skewscan:vbs}

\begin{figure}[]
\centering 
\includegraphics[width=1.0\columnwidth]{./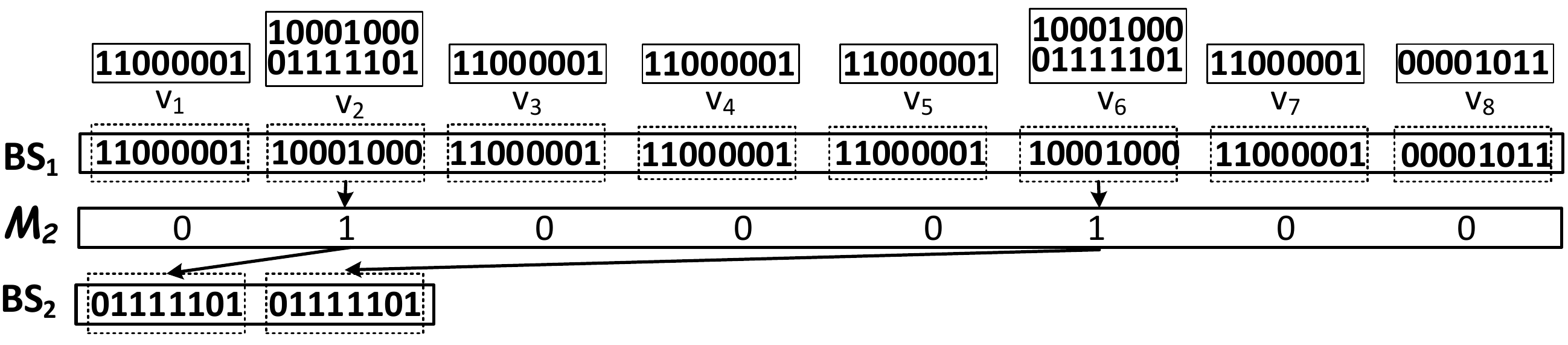}
\caption{Variable Byte Slice (VBS)}
\label{fig:vbs}
\vspace{-0.4cm}
\end{figure}

In above, PPE encodes column values into variable-length integer codes
\emph{logically}. Next, Variable Byte Slice (VBS) describes how to store their bits 
in the main memory \emph{physically} so that we can perform efficient scan and 
lookup operations.

Figure~\ref{fig:vbs} gives a quick glimpse of the VBS storage format used in PP-VBS.
VBS arranges the $j$-th byte of all codes together
in a separate memory region $BS_j$. 
To disambiguate, a bit mask $M_j$ indicates whether a code has a materialized
$j$-th byte in $BS_j$. VBS inherits all good properties of ByteSlice ---
being amenable to byte-level SIMD instructions and efficient lookup --- while
bringing in skew-awareness to further improve performance.
VBS introduces irregularity that was previously challenging for SIMD processing. 
We will show how PP-VBS utilizes advanced bit manipulation instructions 
(e.g., Intel BMI2) in query processing to address those challenges.

Suppose a column of values are encoded with prefix preserving encoding (PPE), VBS vertically distributes 
the bytes of codes into several contiguous memory regions.
Same as~\cite{byteslice}, we refer to these memory regions as byte slices.
If the maximum code length is $K$ bytes, then $K$ byte slices are needed:
$BS_1, BS_2,$..., $BS_K$. 
Starting from the second byte slice, each byte slice is accompanied with a
bitmask indicating whether a code has a (materialized) byte in this slice:
$\mathcal{M}_2,..., \mathcal{M}_K$. 
In other words, VBS packs the $j$-th ($j \ge 2$) bytes of the codes which have the $j$-th byte into $BS_j$ tightly. 
Since all codes have at least one byte, $\mathcal{M}_1$ is omitted.
Figure~\ref{fig:vbs} illustrates the first $8$ codes of a column under VBS. The
maximum code length is two bytes, so there are $2$ byte slices and $1$ bitmask. 
We use the notation $BS_j^{(i)}$ to denote the $i$-th byte in $BS_j$. 
For example, in Figure \ref{fig:vbs}, $BS_1^{(2)}=(10001000)_2$, $BS_2^{(1)}=(01111101)_2$. Similarly, we use notation $\mathcal{M}_{j}^{(i)}$ to denote the $i$-th bit in $\mathcal{M}_j$:
$\mathcal{M}_2^{(1)}={(0)}_2$, $\mathcal{M}_2^{(2)}=(1)_2$. 
In this example, $v_2$ and $v_6$ have the second byte. Their second bytes are packed together in byte slice $BS_2$ with $\mathcal{M}_2^{(2)}$ and $\mathcal{M}_2^{(6)}$ indicating that information.

\begin{figure}[htbp]
\centering 
\includegraphics[width=.8\columnwidth]{./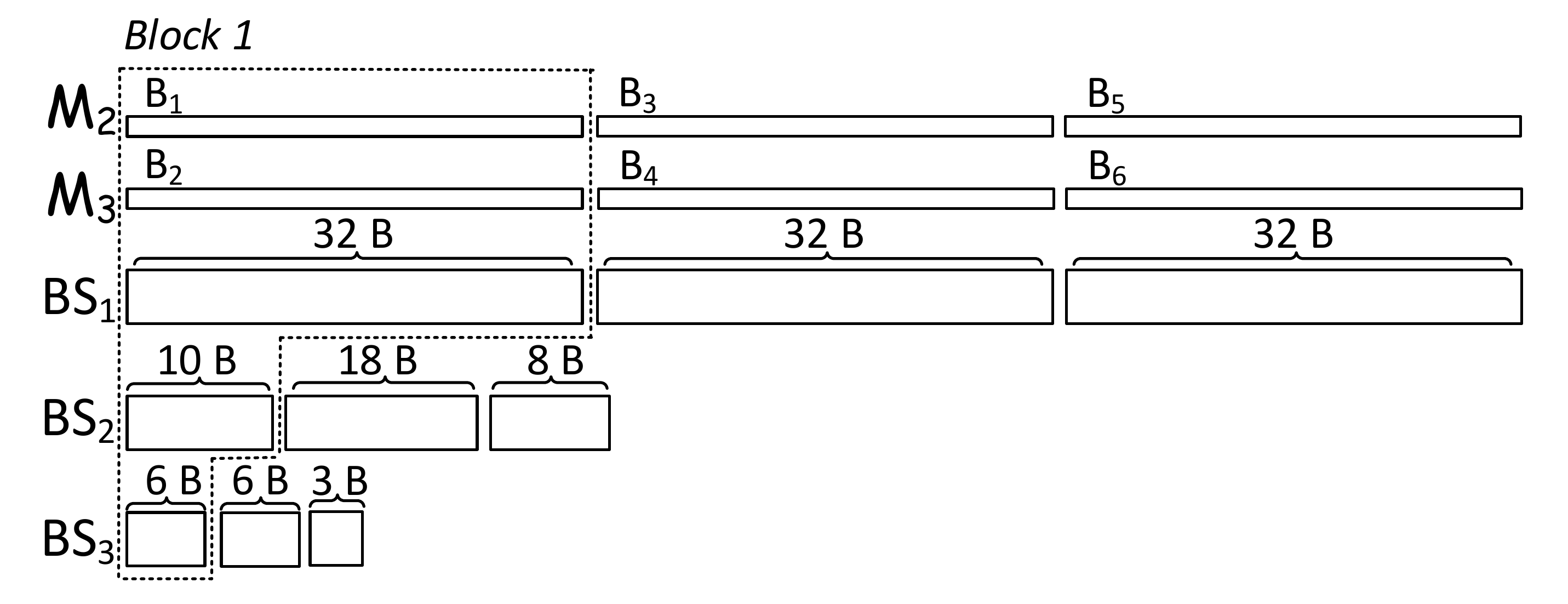}
\caption{A VBS Formatted Column}
\label{fig:vbs2}
\vspace{-0.2cm}
\end{figure}

VBS further divides a column into many \emph{blocks}. Each block contains data
belonging to 32 consecutive codes, under the current implementation using
256-bit AVX2 instructions.
Figure \ref{fig:vbs2} illustrates a VBS column having three blocks (96 codes),
with maximum code length of 3 bytes. Each horizontal line in the figure
represents a contiguous memory region.
The dashed line outlines all data belonging to Block 1. Not surprisingly, 
Block 1 takes up 32 bytes in $BS_1$. Among the 32 codes in Block 1, only 10 
have the second byte. So Block 1 takes up only 10 bytes in $BS_2$. Ditto for
$BS_3$. In general, $BS_j$ always has fewer bytes than $BS_{\{<j\}}$.
When data is moderately and highly skewed, a large portion of codes are
one byte, and $BS_{\{\ge 2\}}$ have very few bytes. 
A segment of 32 bits (4 bytes) from the bitmasks $\mathcal{M}_2$ and 
$\mathcal{M}_3$ are used to store this sparsity information of Block 1. 
In summary,  Block 1 in VBS uses $4\times 2 + 32 + 10 + 6 = 56$ bytes, 
whereas it would have used  $32 \times 3 = 96$ bytes in the original
ByteSlice. 


As discussed in \cite{byteslice, li2013bitweaving, imprint, columnSketches}, scan is a memory-bound operation. Thus, VBS offers at least two advantages:
	1) It reduces memory bandwidth consumption between the memory and the CPU
	during scan, making it scale on many-core architectures more efficiently.
	2) The CPU cache can contain data of more codes, which increases the cache hit rate.



\subsection{Scan and Lookup}
\label{sec:skewscan:qp}

A PP-VBS scan takes in the VBS column, the predicate operation, and the literal code. 
It outputs a result bit vector, which indicates the matching rows.
Our scan algorithm fully utilizes the data parallelism provided by SIMD instructions.
Here, we assume AVX2 architecture is used and the SIMD register is 256-bit.
A SIMD instruction carries out the same operation on the vector of banks simultaneously. 
For example, the \texttt{\_mm256\_add\_epi8()} instruction performs a 32-way addition between two SIMD registers, which adds 32 pairs of 8-bit integers simultaneously.
We further exploit advanced bit manipulation instructions such as \texttt{pdep()} and
\texttt{pext()} in new generation CPUs to process the bit masks.


\subsubsection{\textbf{Scan}}


Without loss of generality, we explain with the \emph{GREATER-THAN}
predicate in the form of  $v > c$. Other predicate types follow suit. 
We use the notation $v^{[j]}$ to denote the $j$-th byte of code $v$. For example,
in Figure \ref{fig:vbs}, $v_1^{[1]}=(11000001)_2$, $v_2^{[2]}=(01111101)_2$. 
Suppose we have four column codes and a predicate literal $c$ as below:
\centerline{$\overline{v}_1=(10000001~\phantom{00000000})_2$}
\centerline{$\overline{v}_2=(11000000~\phantom{00000000})_2$}
\centerline{$\overline{v}_3=(01110111~10101001)_2$}
\centerline{$\overline{v}_4=(10000001~10001001)_2$}
\centerline{ $c = (10000001~\phantom{00000000})_2$}
The goal of the scan is to determine whether each $\overline{v}$ passes or fails 
the predicate $\overline{v} > c$.
Recall that when comparing two codes of different length, we treat the short
code \emph{as if} it were padded zeros at the end. 
PP-VBS starts by comparing the first byte of $c$ with the first byte of
all $\overline{v}_i$:

\centerline{$\overline{v}_1^{[1]} = c^{[1]}$ and $\overline{v}_2^{[1]} > c^{[1]}$ and $\overline{v}_3^{[1]} < c^{[1]}$ and $\overline{v}_4^{[1]} = c^{[1]}$}

At this point, we can safely conclude that $\overline{v}_2$ passes the predicate,
while $\overline{v}_3$ fails. There is no need to examine 
$\overline{v}_2$'s and $\overline{v}_3$'s
second byte.
For $\overline{v}_1$, we proceed to check its auxiliary bit in $\mathcal{M}_2$
(not shown here) to learn that it has no second byte. Hence we know for a fact
that $\overline{v}_1 = c$, meaning $\overline{v}_1$ fails the predicate.
For $\overline{v}_4$, we check its auxiliary bit in $\mathcal{M}_2$, and learn that
it has a second byte. \emph{Without} fetching and examining that second byte
$\overline{v}_4^{[2]}$, we immediately know for a fact that $\overline{v}_4 > c$,
meaning it passes the predicate. That is because, as described in 
Section~\ref{sec:skewscan:vbe}, the last byte (slot sub-code) of a code cannot
be zero. Therefore, 
$$\overline{v}_4 = 10000001~\underset{>0}{\underline{********}} > c = 10000001~\underset{pad}{\underline{00000000}}$$

In summary, scans on all above codes \emph{stop early} after the first byte,
even though the maximum code length is two. More formally, scan
cost on PP-VBS can be bounded by the following lemma:

\begin{lemma}\label{lemmaonly}
Let $l(v)$ be the byte-length of code $v$.
For all predicate types $\texttt{op} \in \{<,>,\le,\ge,=,\neq\}$,
the evaluation of predicate $v_1~\texttt{op}~v_2$ conclusively stops after examining
the $m$-th byte, where $m \le \min\big( l(v_1), l(v_2)\big)$.
\end{lemma}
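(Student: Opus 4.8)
The plan is to reduce all six predicate types to a single claim about the ordinal relationship between the two codes. Because the encoding is order-preserving, once we know whether $v_1 < v_2$, $v_1 = v_2$, or $v_1 > v_2$, the truth value of $v_1~\texttt{op}~v_2$ is fixed for every $\texttt{op} \in \{<,>,\le,\ge,=,\neq\}$. Hence it suffices to show that this three-way relationship is conclusively determined after examining at most $\min(l(v_1), l(v_2))$ bytes. Without loss of generality I would assume $l(v_1) \le l(v_2)$, so the target bound is $m \le l(v_1)$, and I would reason in base-$256$ positional terms: each code is a byte sequence with position $1$ most significant, and the shorter code is padded with trailing zero bytes to align with the longer one, as stipulated in Section~\ref{sec:skewscan:vbe}.

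First I would scan byte positions $j = 1, 2, \ldots$ and split into two cases according to whether the codes ever disagree within the first $l(v_1)$ positions. In the first case, let $j^\star \le l(v_1)$ be the smallest position with $v_1^{[j^\star]} \ne v_2^{[j^\star]}$; since $j^\star \le l(v_1) \le l(v_2)$, both codes physically possess this byte, and because all higher-order bytes agree, the most-significant differing byte alone fixes the ordering (the code with the larger byte at $j^\star$ is the larger number, independent of any lower-order bytes). The comparison therefore stops at $m = j^\star \le l(v_1)$.

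In the second case the two codes agree on every one of the first $l(v_1)$ bytes, and I would examine up through byte $l(v_1)$, so $m = l(v_1)$. If $l(v_1) = l(v_2)$ the codes are identical and the relationship is ``equal.'' The interesting subcase is $l(v_1) < l(v_2)$: here $v_1$ is padded with zeros beyond position $l(v_1)$, while $v_2$ carries a nonempty tail $v_2^{[l(v_1)+1]}, \ldots, v_2^{[l(v_2)]}$. The decisive observation is that, as established in Section~\ref{sec:skewscan:vbe}, the final byte of any code is a slot sub-code and hence nonzero; consequently $v_2$'s tail, ending in a strictly positive byte, represents a strictly positive number, whereas $v_1$'s padded tail is zero. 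Thus $v_1 < v_2$ is forced \emph{without} ever fetching the extra bytes of $v_2$ --- exactly the early-stop behavior illustrated by $\overline{v}_4$ in the running example. Whether $v_2$ even possesses these extra bytes is read off the bitmask $\mathcal{M}_{l(v_1)+1}$, which I would treat as lightweight auxiliary information distinct from ``examining a byte,'' consistent with the paper's accounting.

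Collecting the cases, the ordinal relationship --- and hence every predicate outcome --- is settled at some $m \le l(v_1) = \min(l(v_1), l(v_2))$, which is the claim. I expect the only real obstacle to be the padded subcase $l(v_1) < l(v_2)$: one must argue that sharing a byte-prefix does not recreate the ambiguity that plagues prefix-free schemes, and this hinges entirely on the nonzero-last-byte invariant of the slot sub-codes. I would therefore state explicitly why the intermediate tail bytes of $v_2$ (which may themselves be zero) cannot overturn the conclusion: they contribute only non-negative positional weight on top of the already-positive contribution of the nonzero final byte, so the sign of the comparison is never in doubt.
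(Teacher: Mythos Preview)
Your proposal is correct and follows essentially the same approach as the paper's own proof: reduce to the ordinal relationship, assume without loss of generality that $l(v_1) \le l(v_2)$, and use the nonzero-last-byte (slot sub-code) invariant to resolve the prefix-match case. Your case split by ``first differing byte within the shorter length'' versus ``prefixes agree'' is just a slightly more explicit reorganization of the paper's two inequalities on $P_{l(v_1)}(v_1)$ versus $P_{l(v_1)}(v_2)$.
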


\begin{proof}
Let $P_l(v)$ denote the $l$-byte prefix of a code $v$. 
By definition, $v = P_{l(v)}(v)$ for all $v$.
Without loss of generality, we assume $l(v_1) < l(v_2)$. 
If we have $P_{l(v_1)}(v_1) \le P_{l(v_1)}(v_2)$, then $v_1$ must be less than $v_2$, because $v_2$ must have at least one trailing byte that is larger than 0. 
If $P_{l(v_1)}(v_1) > P_{l(v_1)}(v_2)$, we have $v_2 < P_{l(v_1)}(v_2)+1 \le P_{l(v_1)}(v_1) \le v_1$. 
Another special situation is when $l(v_1) = l(v_2)$. Under such case, when we have $P_{l(v_1)}(v_1) < P_{l(v_1)}(v_2)$, then $v_1 < v_2$; when $P_{l(v_1)}(v_1) = P_{l(v_1)}(v_2)$, then $v_1 = v_2$; when $P_{l(v_1)}(v_1) > P_{l(v_1)}(v_2)$, then $v_1 > v_2$. 
Thus, we can compare the prefixes of length $\min(l(v_1), l(v_2))$ leading bytes to obtain the comparison result between $v_1$ and $v_2$. This completes the proof.
\end{proof}

{\bf In addition}, PP-VBS inherits the early stop capability of ByteSlice,
even when both $v$ and $c$ are long. To illustrate, consider a different 
predicate literal $c'$ for the above example:

\centerline{$c' = (10000000$ $01010000)_2$}

\noindent
In this case, after comparing the first byte, we obtain:

\centerline{$\overline{v}_1^{[1]} > c'^{[1]}$ and $\overline{v}_2^{[1]} > c'^{[1]}$ and $\overline{v}_3^{[1]} < c'^{[1]}$ and $\overline{v}_4^{[1]} > c'^{[1]}$}

\noindent
which suffices to conclude that $\overline{v}_1$, $\overline{v}_2$ and
$\overline{v}_4$ pass the predicate, whereas $\overline{v_3}$ fails. There is no
need to compare $c'^{[2]}$ with any code.

\begin{algorithm}[]
\footnotesize
\caption{Vectorized Scan on PP-VBS}
\label{algo:scan}
	\KwIn{predicate $v>c$; $c$ is $l$-byte;}
	\KwOut{result bit vector $R$}
	\For {j = 1,...,l}
	{
		$W_{cj} = \texttt{simd\_broadcast}(c^{[j]})$\\
	}
	$I_j = 0$, for $1 \le j \le l$ \\
	\For{every $S/8$ codes $v_{(I_1+1)}...v_{(I_1+S/8)}$}
	{
		$\mathcal{Z}_{ep} = 1^{S/8}$ \\
		$\mathcal{Z}_{gt} = 0^{S/8}$ \\
		\For{$j=2,...,\min(l+1,K)$}
		{
		    $B_j = \texttt{load}(\mathcal{M}_j^{(I_1+1)},...,\mathcal{M}_{j}^{(I_1+S/8)})$
		}

		\For{$j=1,...,l$}
		{
			\If{$\texttt{test\_zero}(\mathcal{Z}_{eq})$} 
			{
				\textbf{break} 
			}

			$W_j$ = $\texttt{simd\_load}(BS_j^{(I_j+1)},...,BS_j^{(I_j + S/8)})$ \\
			$M_{gt}$ = $\texttt{simd\_movemask\_epi8}(\texttt{simd\_cmpgt\_epi8}(W_j, W_{cj}))$ \\
			$M_{eq}$ = $\texttt{simd\_movemask\_epi8}(\texttt{simd\_cmpeq\_epi8}(W_j, W_{cj}))$ \\
			\If{$j > 1$}
			{
				$M_{gt} = \texttt{pdep}(M_{gt}, B_j)$ \\
				$M_{eq} = \texttt{pdep}(M_{eq}, B_j)$ \\
			}
			\If{$j < l$}
			{
				$\mathcal{Z}_{gt} = \mathcal{Z}_{gt} | (\mathcal{Z}_{eq} \& M_{gt})$ \\
				$\mathcal{Z}_{eq} = \mathcal{Z}_{eq} \& M_{eq} \& B_{j+1}$
			}
			\ElseIf{$l < K$}
			{
				$\mathcal{Z}_{gt} = \mathcal{Z}_{gt} | (\mathcal{Z}_{eq} \& M_{gt}) | (\mathcal{Z}_{eq} \& M_{eq} \& B_{j+1})$
			}
			\Else 
			{
				$\mathcal{Z}_{gt} = \mathcal{Z}_{gt} | (\mathcal{Z}_{eq} \& M_{gt})$
			}
		}
		$I_1 = I_1 + S/8$ \\
		\For {$j=2,...,l$}
		{
	    	$I_j = I_j + \texttt{POPCNT}(B_j)$ 
		}
		Append $\mathcal{Z}_{gt}$ to $R$\\
	}

\end{algorithm}

Algorithm \ref{algo:scan} delineates the VBS's implementation that
``vectorizes'' the above scan algorithm to evaluate $S/8=32$ codes in parallel
in each iteration using SIMD AVX2 instructions, where the register length $S=256$ bits.
The algorithm takes predicate $v > c$ as input and outputs a result bit vector $R$ whose $i$-th bit is $1$ if $v_i$ satisfies the predicate or $0$ otherwise. 
Suppose the constant code $c$ is $l$-byte ($1 \le l \le K$), $K$ is the maximum code length and the column is numerical.
Initially, the bytes of the constant code $c$ are broadcast to $l$ SIMD words (Lines 1-2). 
Then, the algorithm scans the column codes in a {\tt for} loop, with one block (32 codes) each time (Lines 4-24). 
The $i$-th bit of the 32-bit mask $\mathcal{Z}_{gt}$ holds the temporary
Boolean value of $v_i > c$. Ditto for $\mathcal{Z}_{eq}.$ These two masks are
iteratively updated as we examine more bytes of the codes.
In order to know which codes in the current block have the $j$-th byte, 
it loads corresponding auxiliary word $B_j$ from $\mathcal{M}_{j}$ (Lines 7-8).
The algorithm then examines the codes byte-by-byte through $l$ iterations (Lines 9-24). 
In the $j$-th iteration, it first tests whether $\mathcal{Z}_{eq}=0$.
If yes, the predicate is conclusive for all codes in the block, and thus
we can \emph{stop early} (Lines 10-11). 
Otherwise, it loads the next $S/8$ bytes from $BS_j$ into $W_j$,
and executes two $S/8$-way SIMD instructions to determine the codes whose $j$-th bytes are either (1)$>$$c^{[j]}$ or (2)$=$$c^{[j]}$. 
The 256-bit results are condensed into 32-bit bit maps using the
\texttt{movemask} instruction (Lines 13--14).

Note that $W_j$ may contain bytes belonging to the next block due to sparsity.
Concretely, only $T_j$ = {\tt POPCNT}($B_j$) $\le 32$ codes have $j$-th 
($j>1$) byte in current block.  Thus, only the $T_j$ lower bits of $M_{eq}, M_{gt}$ in 
Lines 13--14 are of interest.
To put these bits in the right place, we utilize the instruction {\tt pdep}\footnote{{\tt pdep}(src, mask), which is an Intel BMI2 instruction, takes the low bits from the first {\tt src} operand and deposits them into a destination at the corresponding bit locations that are set in the second {\tt mask} operand. All other bits (bits not set in {\tt mask}) in destination are set to zero.}  to scatter the $T_j$ consecutive low order bits of $M_{gt}$ and $M_{eq}$ according to $B_j$ (Lines 15-17). 
Figure \ref{fig:pdep-pext}(a) shows an example of {\tt pdep} instruction.
After that, if the $i$-th bit of $B_j$ is zero, then the $i$-th bit of $M_{gt}$ and $M_{eq}$ will be $0$.
Otherwise, the $i$-th bit in $M_{gt}$ or $M_{eq}$ will indicate whether $j$-th byte of $i$-th code in a block is (1)$>$$c^{[j]}$ or (2)$=$$c^{[j]}$.
After processing the current $S/8$ codes, the algorithm must update the indices $I_j$ ($1 < j \le l$) to indicate where to load the $j$-th bytes of next block from $BS_j$ (Lines 25-27). Finally, the result of current block $\mathcal{Z}_{gt}$ is appended to $R$ (Line 28) before the processing of the next block begins.
Algorithm \ref{algo:scan} can be easily modified to handle other comparison operators (e.g., $\ge$, $=$), which are omitted in interest of space.

\begin{figure}[]
\centering 
\includegraphics[width=1.0\columnwidth]{./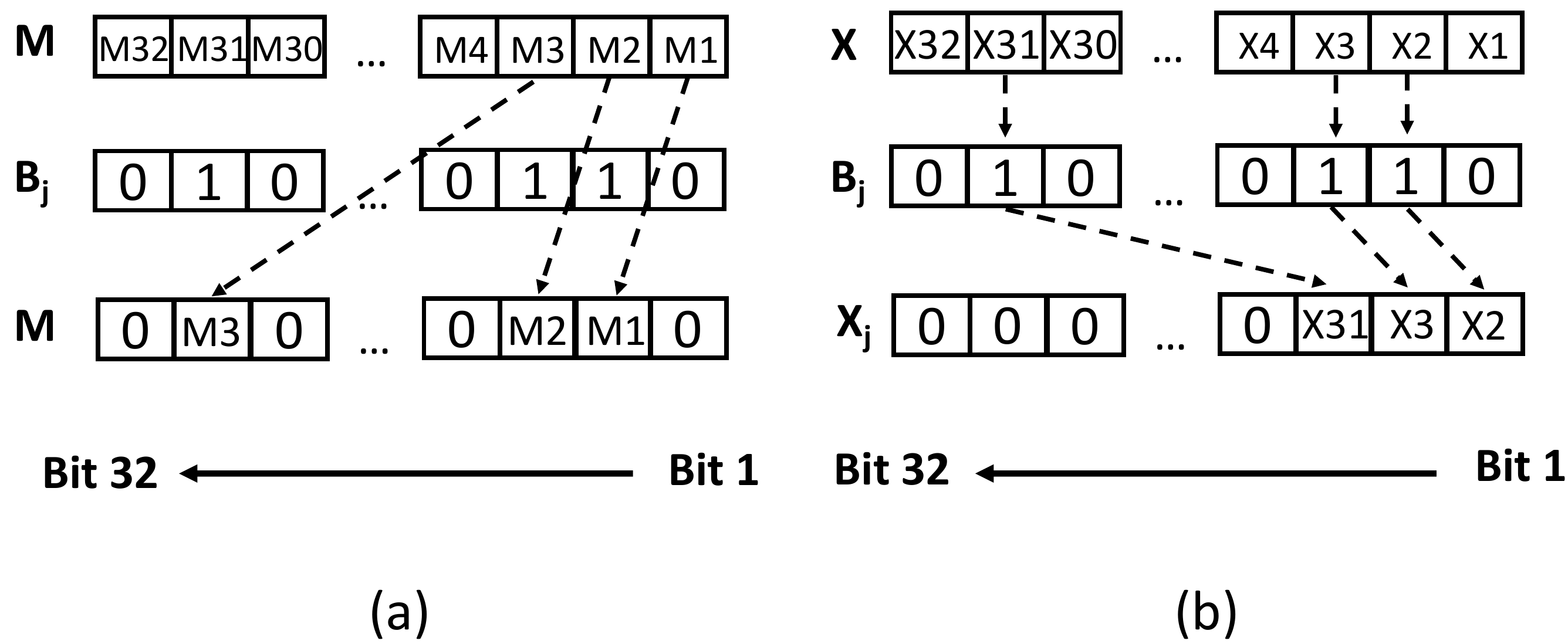}
\caption{(a) {\tt pdep} instruction (b) {\tt pext} instruction}
\label{fig:pdep-pext}
\vspace{-0.4cm}
\end{figure}

Lemma~\ref{lemmaonly} offers insight why PP-VBS is efficient for skewed columns.
Early stop in Algorithm~\ref{algo:scan} happens \emph{per block}. Scan on a block
can stop early when all $S/8$ codes in it stop early. When the column is more skewed,
the probability is higher that all codes in one block are high-frequency values,
which are in turn encoded with few bytes under PPE. That bounds the scan cost
on every block at a small $m$, even if the literal code $c$ is long.


To handle complex predicates that involve multiple columns, we follow 
ByteSlice~\cite{byteslice} to pipeline the result bit vector of one predicate evaluation 
to another so as to increase the early stop probability of the subsequent evaluation. 


\subsubsection{\textbf{Lookup}}

\emph{Lookup} refers to the operation of retrieving the codes 
from a column of interest given a result bit vector $R$, which is produced by the scan.
In the scan-based OLAP framework mentioned in Section \ref{sec:background}, the retrieved codes are inserted into an array of a fixed-length 
data type (e.g., \texttt{int32[]}). 
Subsequent operations like aggregations and sorts will consume this array. 
In order to easily embed PP-VBS into this framework, we pad zeros at the end of 
short codes in the {lookup} operation to generate fixed-length codes.

\begin{algorithm}[ht]
\footnotesize
\caption{Lookup Under PP-VBS}
\label{algo:lookup}
	\KwIn{$R$: result bit vector; $\mathbb{C}$: a VBS formatted column; $K$: the maximum code length in $\mathbb{C}$}
	\KwOut{$L$: a list of $K$-byte codes}
	\For{$1 \le j \le K$}{
	    $I_j = 0$;
	}
	\For{{every $\frac{S}{8}$-bit word $x$ \textbf{\textup{in}} $R$}}
	{
	    \For{$j = 2,...,K$}{
	        $B_j = \texttt{load}(\mathcal{M}_{j}^{I_1+1},...,\mathcal{M}_{j}^{I_1+S/8})$;
	    }
		$\beta = 1$; \\
		\For{$j=2,...,K$}
		{
			\If{$\texttt{test\_zero}(x\& B_j)$}
			{
				\textbf{break};
			}
			$\beta = \beta + 1$;\\
		}
		\For{$j=2,...,\beta$}{
			$x_j = \texttt{pext}(x,B_j)$;
		}
		\While{$!test\_zero(x)$}
		{
			
			$i = \texttt{POPCNT}(\textbf{P}(x)) + 1$;\\
			$r_1 = I_1 + i$; \\
			$C = BS_1^{(r_1)}$;\\
			$x = \textbf{E}(x)$; \\
			\For{$j=2,...,\beta$}
			{	
				\If{$\texttt{test\_zero}(B_j$ $\&$ $(1 \ll S/8 - i))$}
				{
					$C = C \ll (\beta - j + 1) * 8$; \\
					\textbf{break};
				}	
				$r_j = I_j + \texttt{POPCNT}(\textbf{P}(x_j)) + 1$;\\
				$C = (C \ll 8) + BS_j^{(r_j)}$; \\		
				$x_j = \textbf{E}(x_j);$
			}
			$C = C \ll (K - \beta) *8$;  \\
			Append $C$ to $L$;\\
		}
		$I_1 = I_1 + S/8$;\\
	    \For{$j=2,...,K$}{
	        $I_j = I_j + \texttt{POPCNT}(B_j)$;
	    }
	}

\end{algorithm}

Algorithm \ref{algo:lookup} delineates the pseudo-code of the lookup operation 
under PP-VBS. It can be seen as the ``inverse operation'' of VBS.
Before formally describing Algorithm 4,  we first introduce two bitwise manipulations that provide a fast way to find and manipulate the rightmost 1 bit in a word:
(1) $\textbf{E}(x)=x\&(x-1)$, which \textbf{e}rases the rightmost $1$ in a word $x$;
(2) $\textbf{P}(x)=x\oplus -x$, $\oplus$ is XOR, which \textbf{p}ropagates the
rightmost 1 to the left in the word $x$, making them all 1's, and erases the rightmost $1$. 
Here are examples on an 8-bit word $x$.
\centerline{$x=(010100\underline{1}0)_2$}
\centerline{$\textbf{E}(x)=x\& (x-1) = (010100\underline{0}0)_2$}
\centerline{$\textbf{P}(x)=x\oplus -x = (111111\underline{0}0)_2$}

The basic idea behind Algorithm \ref{algo:lookup} is to extract each $1$ from the 
result bit vector $R$ and reconstruct the corresponding code from the 
VBS-formatted column $\mathbb{C}$.
To construct a code, we retrieve the bytes from corresponding byte slices and then concatenate them in order.
For example, in Figure \ref{fig:vbs}, to obtain 2-byte $v_2$, bytes $BS_1^{(2)}$ and $BS_2^{(1)}$ will be retrieved. 
When the column is more skewed, the probability is higher that the length of a code retrieved is short so that fewer bytes will be retrieved from the memory.

The algorithm executes in a {\tt for} loop (Lines 3--26) that handles the
result bit vector $R$ in blocks of 32 bits. 
Let $K$ be the maximum code length in $\mathbb{C}$.
$I_j$ ($1 \le j \le K$) tracks the start position of the current block
in $BS_j$.
In each iteration, it loads a $\frac{S}{8}=32$-bit word $x$ from $R$ into CPU.
In order to know which codes in the current block have the $j$-th ($j>1$) byte, the algorithm loads the associated auxiliary word $B_j$ from $\mathcal{M}_j$ (Lines 4--5) .
It then determines the (local) maximum length of codes which are extracted 
in the current block (Lines 7--10) as $\beta$. 
So it knows which byte slices will be touched and 
skips processing the rest. 
Next, the algorithm uses {\tt pext}\footnote{{\tt pext}(src, mask) is an Intel BMI2 instruction. 
For each bit set in the second {\tt mask} operand, it extracts the corresponding bits from the first {\tt src} operand and writes them into contiguous low bits of destination. } 
instruction to pack bits from the word $x$ according to $B_j$ ($2 \le j \le \beta$) 
into contiguous low bits of a destination word $x_j$ (Lines 11--12).
Figure~\ref{fig:pdep-pext}(b) shows an example of the {\tt pext} instruction.
A bit set in $x_j$ indicates one byte will be retrieved from $BS_j$ for constructing a code that is selected in $R$.


Then we iteratively find the rightmost `1' in $x$, calculate its position using
\texttt{POPCNT}$(\textbf{P}(\cdot))$, and erase it from $x$
(Lines 14--17).
Each popped `1' indicates a code to be reconstructed.
The offset is added to tracking index $I_1$ to obtain $r_1$, which is the position of that code's first byte
in $BS_1$ (Line 15).
For the $j_{\{>1\}}$-th byte, it first inspects whether the code has the $j$-th byte.
If so, it retrieves the $j$-th byte from $BS_j$ (Lines 22--24) and concatenates it to $C$. 
After constructing $C$, we pad zeros at its end and append $C$ to the list
$L$ (Lines 25--26). 
The above procedure repeats until $x=0$, meaning all wanted
codes are reconstructed.
After one $\frac{S}{8}$-bit word $x$ has been consumed, we update every 
tracking indices $I_j$ ($1\le j \le K$).


 \section{Column-Layout Advisor} \label{section:cla}
\begin{figure*}
    \begin{center}
        \subfigure[Column ``Seconds'' (Numeric)]{\label{fig.choose_skewscan}
            \includegraphics[width=0.60\columnwidth]{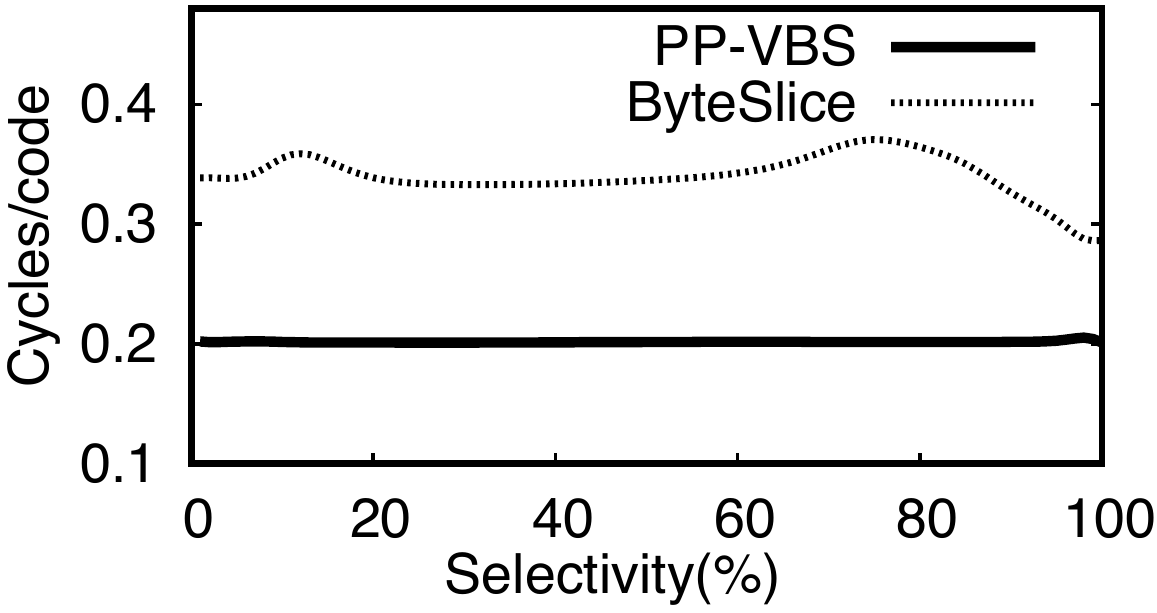}}
        \subfigure[Column ``Total'' (Numeric)]{\label{fig.choose_anyone}
           \includegraphics[width=0.60\columnwidth]{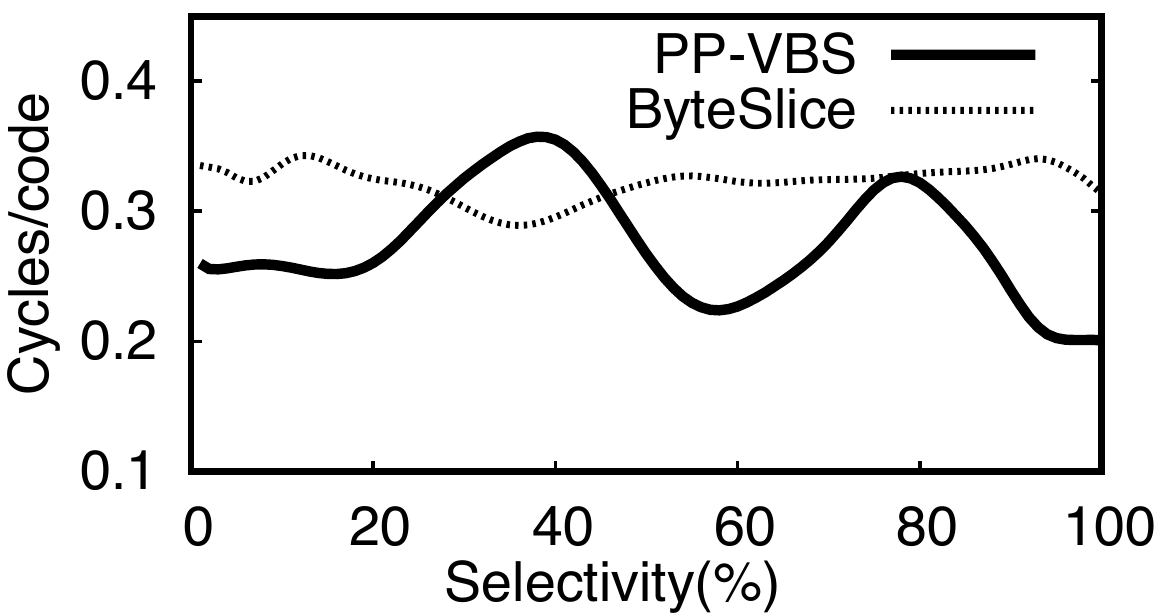}}
        \subfigure[Column ``Pickup Census Tract'' (Categorical)]{\label{fig.choose_byteSlice}
            \includegraphics[width=0.60\columnwidth]{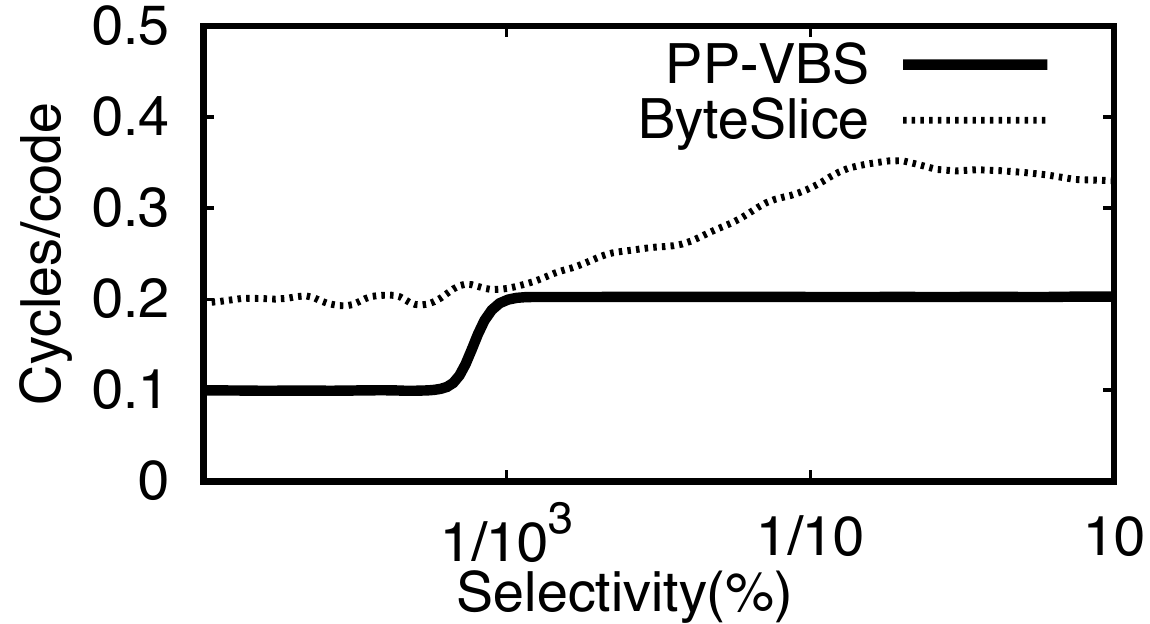}}
    \caption{Scan Profiles for 3 Real Data Columns from the TAXI\_TRIP \cite{taxitrip}}\label{fig:hybrid}
\end{center}

\end{figure*}

As our experiments show in Section~\ref{sec:experiments:microbenchmark}, 
Bit-Packed and PE-VBP storage layouts are dominated by ByteSlice or PP-VBS in terms of scan and lookup performance.
Therefore, given a data column, 
ByteStore only has to make a decision to store it in ByteSlice or PP-VBS format.
Since both ByteSlice and PP-VBS leverage byte as the storage unit, 
their lookup performances are equally good.
Therefore, the decision mainly lies on which one yields better performance on scan.

There are a variety of factors that influence the scan performances of ByteSlice and PP-VBS.
Data-related factors include the data distribution type (e.g., Zipf, Gaussian), the parameters (e.g., the Zipf factor),
the domain size, and the value type (e.g., numeric or categorical).
Query-related factors 
include the predicate type (e.g., $>$, $=$)
and its selectivity (percentage of codes that satisfy the predicate).

Although modeling the relationship 
between the scan performance and the factors above  is challenging,
ByteStore does not need to do so because the data columns are actually given.
Therefore, 
the column-layout advisor of ByteStore goes for an \emph{experiment-driven} approach \cite{experimentDriven}
that chooses the best storage layout for each column based on running (scan) experiments on them.
Specifically, given a data column, we 
first encode it twice: one using ByteSlice and one using PP-VBS; and then 
we generate and execute scan queries with different selectivities on top
to get a profile.
Finally, we choose the storage layout of a column based on its profile.

Figure \ref{fig:hybrid} depicts the profiles of three real columns 
from the TAXI\_TRIP \cite{taxitrip} real dataset 
obtained from Google BigQuery.
For numeric columns, 
we use $<$ {\tt c} as the profiling predicate
because from 
Section \ref{sec:skewscan:qp}
we know that the scan implementations of other operators (e.g,. $=$) are largely similar and thus their scan performance is also similar (our experiments also confirm this). 
For categorical columns, we use $=$ {\tt c} as the profiling predicate.
Each profile is obtained by generating and executing queries 
with 100 predicate literals from the column that span across the entire feasible selectivity spectrum.
For example, Figure \ref{fig:hybrid}c shows 
that the most unselective value in the categorical column ``Pickup Census Tract'' would retrieve 9.9\% of the column.  All other values have lower selectivity than that.


After profiling, our column-layout advisor computes and picks the one with a smaller area under curve (AUC).
For example, it is obvious that for the column ``Seconds'' in the TAXI\_TRIP dataset (Figure~\ref{fig:hybrid}(a)),
the AUC of PP-VBS is smaller than ByteSlice (because that column is skewed).
Therefore, the column layout advisor would retain that column encoded using PP-VBS and discard the one encoded using ByteSlice.
For the column ``Total'' in the TAXI\_TRIP dataset (Figure~\ref{fig:hybrid}(b)),
our column-layout advisor would also retain the one encoded using PP-VBS because it outperforms ByteSlice for a wide range of selectivities.  
Of course, if the (range of the) selectivity of a predicate is known, it is straightforward for our advisor to include that factor into account.  
Algorithm~\ref{algo:cla} summarizes our discussion above in the form of 
pseudo-code.



\begin{algorithm}[h]
\caption{Column Layout Advisor}
\label{algo:cla}
\footnotesize
\KwIn{
$\mathcal{C}$: a $d$-column dataset, $\mathcal{C}=\{\mathcal{C}_1, \mathcal{C}_2, ..., \mathcal{C}_d\}$. 
}
\KwOut{
$\mathcal{L}$: layouts for each column, $\mathcal{L}=\{\mathcal{L}_1, \mathcal{L}_2, ..., \mathcal{L}_d\}$.
}
\For {j = 1, ..., d} 
{
Encode $\mathcal{C}_j$ with dictionary encoding and inject the column codes into ByteSlice, get $\mathfrak{C}_j$. \newline
Encode $\mathcal{C}_j$ with \emph{prefix preserving encoding} and inject the column codes into VBS, get $\mathbb{C}_j$. 

\If{$\mathcal{C}_j$ is numerical}
{
Set predicate $\mathcal{P}$ as $<$.
}
\Else
{
Set predicate $\mathcal{P}$ as $=$.
}
Select 100 values $\{v_i\}_{i=1}^{100}$ from the column $\mathcal{C}_j$ that span across the entire feasible selectivity spectrum and the corresponding selectivities are $\{s_i\}_{i=1}^{100}$. \newline
Evaluate predicate $\mathcal{P}$ on $\mathfrak{C}_j$ with literals $\{v_i\}_{i=1}^{100}$ and get 100 (selectivity, time) points, $\{(s_i, y_i)\}_{i=1}^{100}$. \newline
Evaluate predicate $\mathcal{P}$ on $\mathbb{C}_j$ with literals $\{v_i\}_{i=1}^{100}$ and get 100 (selectivity, time) points,  $\{(s_i, z_i)\}_{i=1}^{100}$.\newline
Fit the performance curve with $\{(s_i, y_i)\}_{i=1}^{100}$ and obtain the AUC (area under the curve) $\mathcal{Y}$. \newline
Fit the performance curve with $\{s_i, z_i)\}_{i=1}^{100}$ and obtain the AUC (area under the curve) $\mathcal{Z}$. 
\If{$\mathcal{Y}$ $\le$ $\mathcal{Z}$}
{
$\mathcal{L}_j$ = {\tt ByteSlice}.
}
\Else
{
$\mathcal{L}_j$ = {\tt PP} - {\tt VBS}.
}
}
\end{algorithm}



\section{Experimental Evaluation}
\label{sec:experiments}

We run our experiments on a rack server with a 2.1GHz 8-core Intel Xeon CPU E5-2620 v4, and 64GB DDR4 memory.
Each core has 32KB L1i cache, 32KB L1d cache and 256KB L2 unified cache. All cores share a 20MB L3 cache. 
The CPU is based on Broadwell microarchitecture and supports AVX2 instruction set.
In the micro-benchmark evaluation, we compare PP-VBS with Bit-Packed, PE-VBP and ByteSlice to show that PP-VBS is dominating on skewed columns. 
In the real data evaluation, we show that queries on ByteStore outperform any homogeneous storage engines.
Unless stated otherwise, all experiments are run using \emph{one core}.

\subsection{Micro-Benchmark Evaluation}
\label{sec:experiments:microbenchmark}
For the first experiment, we create a column with one billion numeric values. 
The column values are integer numbers in the range of [$0, 2^d$).
Following PE-VBP \cite{paddedEncoding}, we generate the column values from the Zipf distribution.

\begin{figure*}[t]
\center
\includegraphics[width=0.9\linewidth]{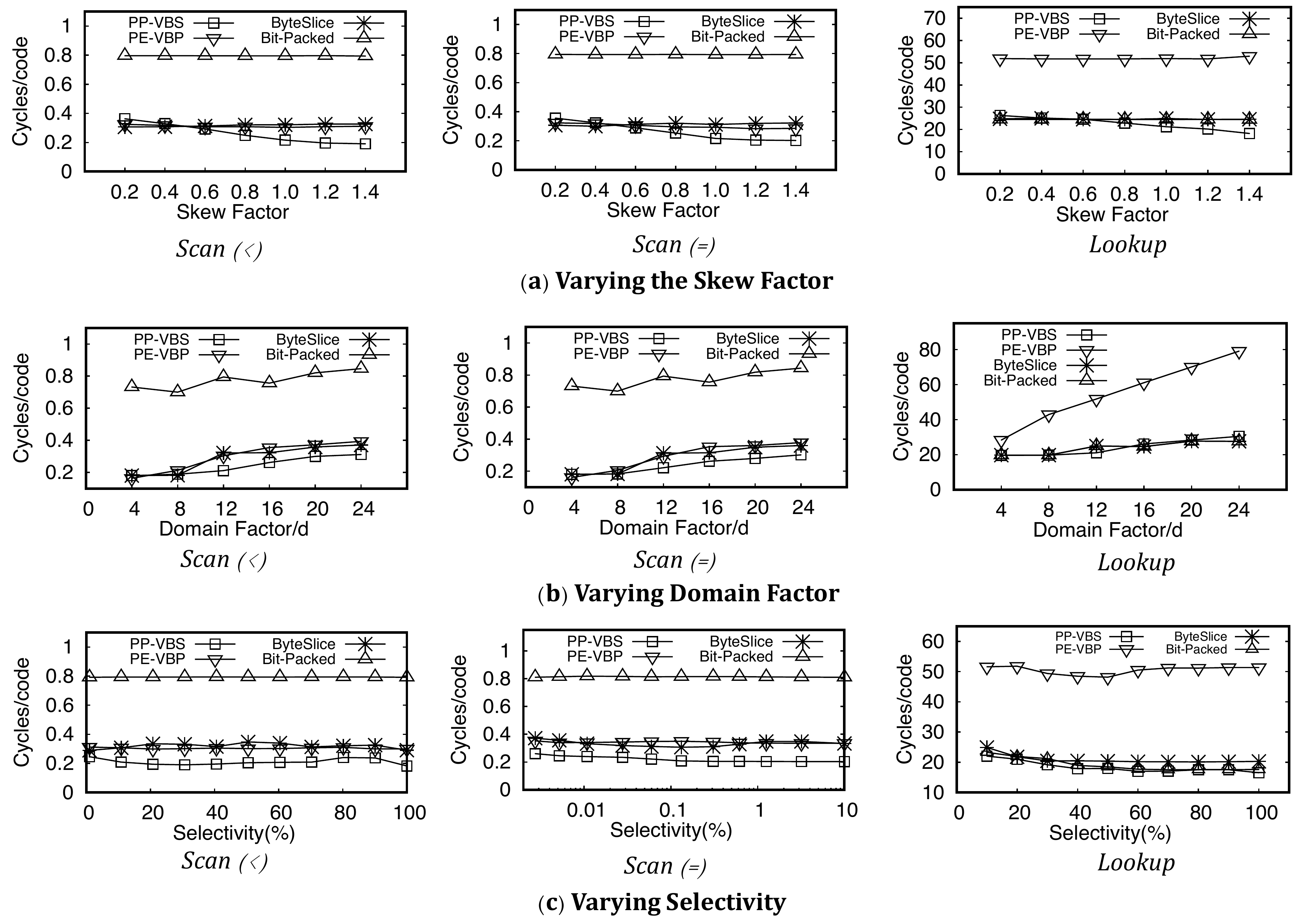}

\caption{Sensitivity Analysis}
\label{fig:whole-speedup}
\vspace{-0.2cm}
\end{figure*}

\begin{table*}[h]
\centering
\begin{tabular}{|c|c|c|c|c|c|}
\hline
Dataset      & \# of Rows  & \# of Columns (ByteSlice:PP-VBS) & Encoding Time & Profiling \& Layout Selection Time & \# of Public Queries\\ \hline \hline
TAXI\_TRIP \cite{taxitrip}&  185,666k & 23 (6:17)& 180.21 min & 12.25 min & 3  \\ \hline
HEALTH \cite{health}  & \phantom{00}2,784k & 6 (4:2)  &  \phantom{00}1.12 min & \phantom{0}0.15 min & 1    \\ \hline
NYC \cite{nyc} &  146,113k   &  19 (13:6) & 117.16 min & \phantom{0}8.03 min & 3\\ \hline
EDUCATION \cite{education} & \phantom{00}5,082k & 6 (3:3) & \phantom{00}1.98 min & \phantom{0}0.21 min & 1\\ \hline
FEC \cite{fec}     &  \phantom{0}20,557k & 35 (20:15) & \phantom{0}30.59 min& \phantom{0}2.58 min  & 2      \\ \hline
NPPES \cite{nppes} & \phantom{00}5,943k & 480 (353:127) & 123.88 min & 15.92 min & 2 \\ \hline
\end{tabular}
\caption{Real Datasets: Characteristics and Ingestion Time/Result}
\label{table:real_data}
\vspace{-0.4cm}
\end{table*}

Figure \ref{fig:whole-speedup}(a) reports the scan and lookup performance of different layouts when varying skew factor.
The results are averages from queries with 100 different selectivities. 
In this experiment, following \cite{paddedEncoding}, we set the domain size as $2^{12}$ (i.e., $d=12$). 
It is reasonable since in real-world datasets, the number of distinct values of columns (aka domain size) is mainly in the range of ($2^7$,$2^{16}$) \cite{li2013bitweaving,willhalm2009simd}.
As shown in Figure \ref{fig:whole-speedup}(a), ByteSlice is dominant on scan operation 
when the skew factor is less than 0.5;
PP-VBS starts to dominate when the skew factor increases.
PP-VBS, ByteSlice and PE-VBP achieve way better scan performance than Bit-Packed layout because of early stop.
%
In terms of lookup, PP-VBS and ByteSlice perform as well as Bit-Packed and outperform
PE-VBP in all cases because they do not scatter the bits into so many different words as PE-VBP does.
Lookup performance on PP-VBS improves mildly when the data is getting more skewed because averagely each block of a column contains more short-length codes under higher skew. 
It decreases the number of bytes read from the memory, which benefits the memory-bound lookup operation.

Figure \ref{fig:whole-speedup}(b) reports the scan and lookup performance of different layouts when varying the domain size (i.e., $2^4 \sim 2^{24}$).
We fix the skew factor as 1.0 in this experiment.
On skewed data, PP-VBS outperforms 
Bit-Packed, PE-VBP and ByteSlice under a wide range of domain size in terms of scan operation.
The cost of scan increases with the domain size because generally more bits of a code are retrieved from the memory for a predicate. 
Lookup time also increases with the domain size 
because the average code length increases with domain size.

Figure \ref{fig:whole-speedup}(c) 
reports the scan and lookup performance of different layouts when varying selectivities. 
We fix skew factor as 1.0 and domain size as $2^{12}$ (i.e., $d = 12$) in this experiment. 
PP-VBS dominates all the other storage layouts on scan for all selectivities.
Both PP-VBS and ByteSlice also have as excellent performance as Bit-Packed and outperform PE-VBP on lookup operation under all selectivites.

\begin{figure}[]
\centering 
\includegraphics[width=1.0\columnwidth]{./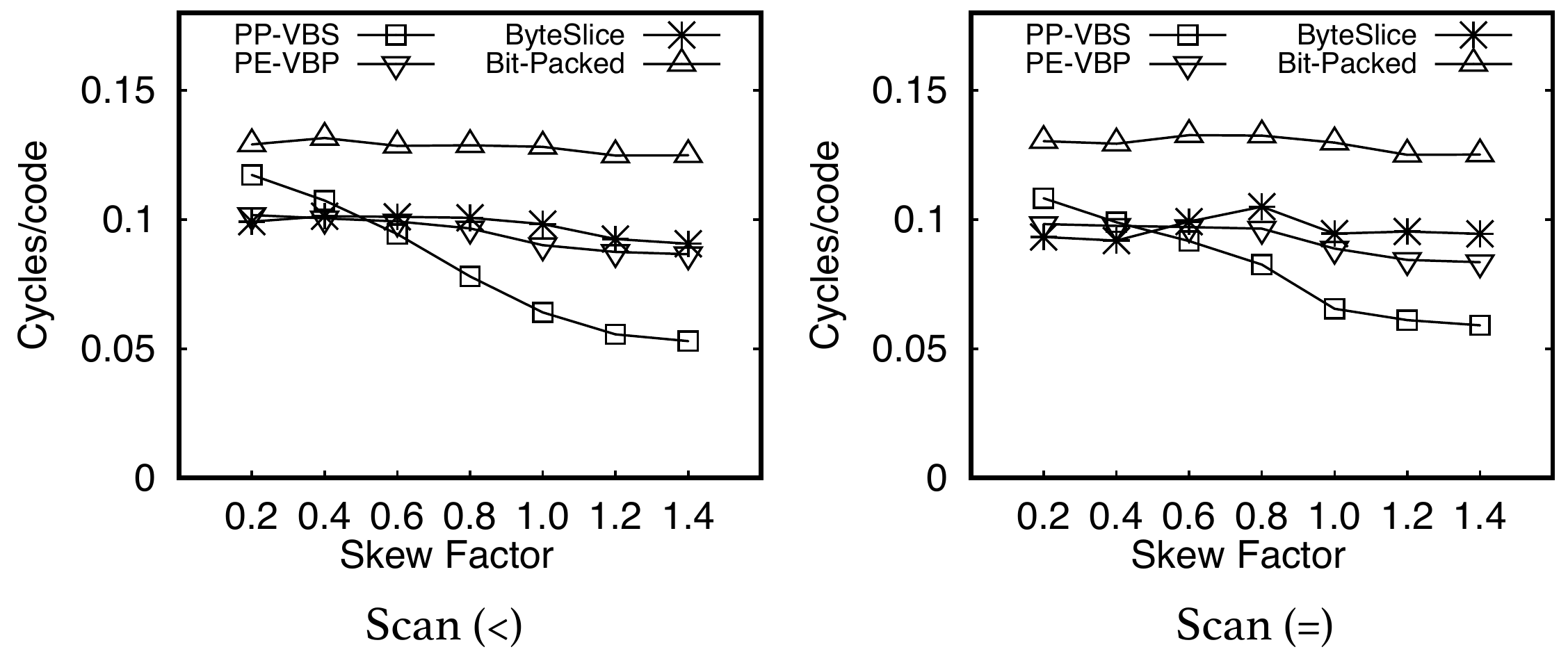}
\caption{Multi-threaded Scan (\#threads = 8)}
\label{fig:multi_threaded}
\vspace{-0.4cm}
\end{figure}

In addition, we evaluated our experiments of multiple threads. 
Figure ~\ref{fig:multi_threaded} reports the scan performance of different layouts when varying skew factor. 
In this experiment, we fixed the domain size as $2^{12}$ (i.e., $d$ = 12) and the number of threads as $8$, which means all CPU cores are used.
It is clear to see that we can draw similar conclusion from the multi-threaded experiments to that from the single-threaded counterpart.

we also have carried out experiments
of (i) varying the cardinalities of the columns,
(ii) using data generated by Gaussian distribution of different variances instead of using Zipf distribution,
(iii) using other operators (e.g., $\geq$). 
Since those experiments draw similar conclusions as the above,
we do not present them here for space interest.

\begin{figure}[]
\centering 
\includegraphics[width=1.0\columnwidth]{./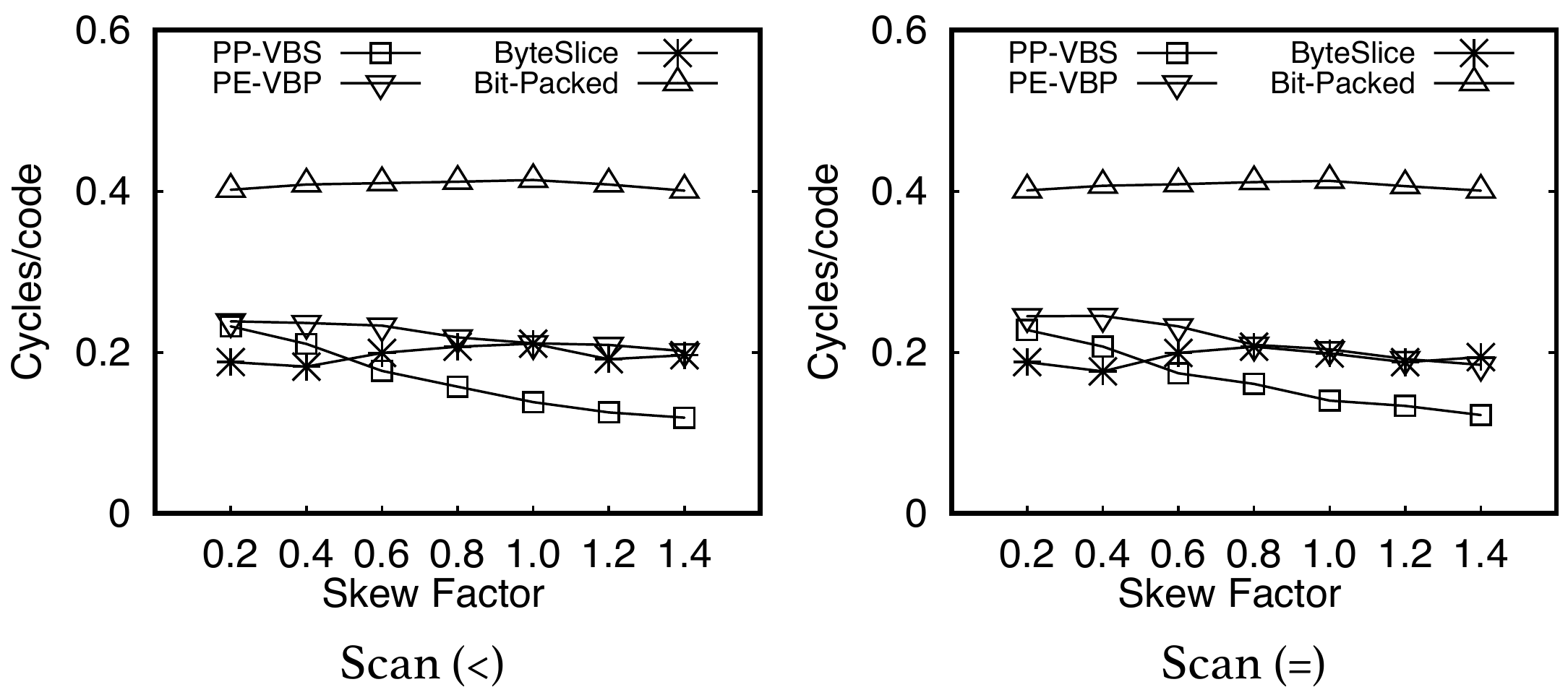}
\caption{Scan Performance with AVX-512}
\label{fig:avx512}
\vspace{-0.4cm}
\end{figure}

\textbf{AVX-512}: Our techniques are not specific to AVX2, they can be straightforwardly extended to AVX-512 model. 
We re-implemented PP-VBS, Bit-Packed, PE-VBP and ByteSlice with AVX-512 instructions (e.g., {\tt \_mm512\_cmpgt\_epu8\_mask()} \footnote{This instruction compares 64 pairs of unsigned 8-bit integers in 512-bit SIMD registers for greater-than.}) and ran the experiments on a 2.2GHz 10-core Intel Xeon Silver 4114 processor. 
Figure \ref{fig:avx512} reports the scan performance of different layouts when varying skew factor using single core. 
Domain size is fixed as $2^{12}$ in this experiment. 
We can obtain the similar conclusion that ByteSlice is dominant on scan operation when the skew factor is less than 0.5; PP-VBS starts to dominate when the skew factor increases.

\begin{figure}[]
\centering 
\includegraphics[width=1.0\columnwidth]{./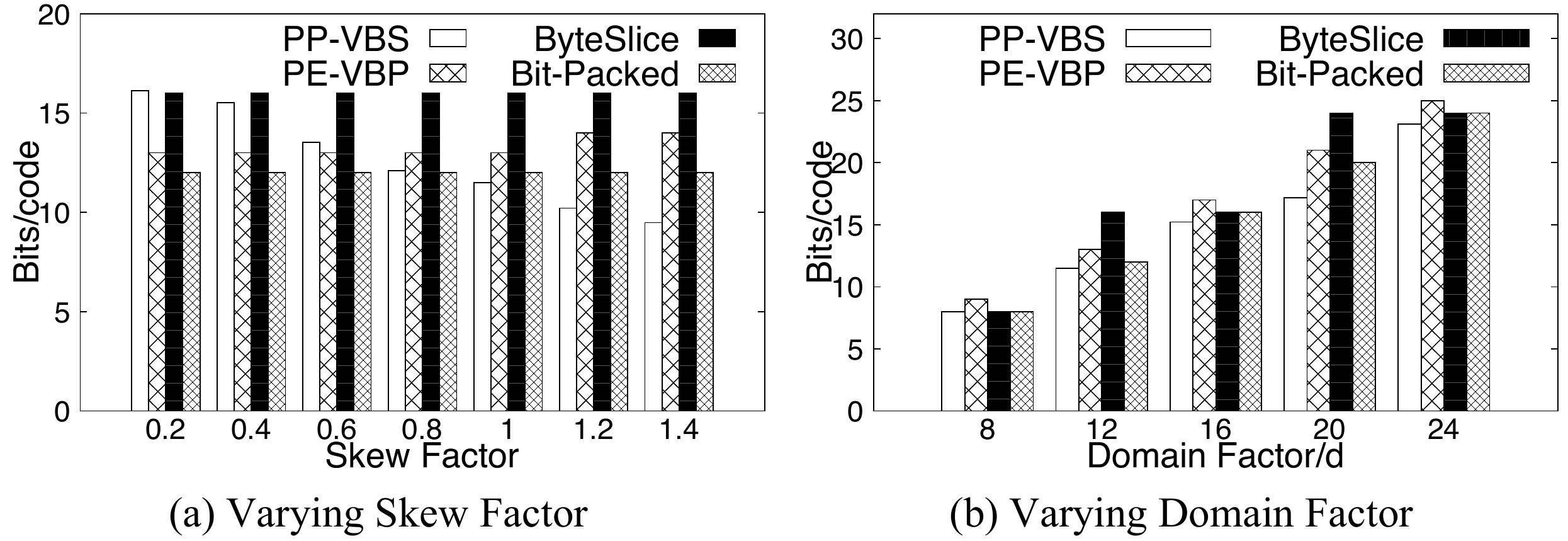}
\caption{Memory Usage of Different Layouts}
\label{fig:memory_usage}
\vspace{-0.4cm}
\end{figure}

\textbf{Memory Usage}: Figure~\ref{fig:memory_usage} reports the average number of bits used per code under different storage layouts. 
In Figure~\ref{fig:memory_usage}(a), we fixed the domain size as $2^{12}$ and varied the skew factor. 
Not surprisingly, the average number of bits per code under PP-VBS reduces as the skew factor increases and PP-VBS starts to dominate other storage layouts when the skew factor is larger than 0.8. 
Notably, we also counted the memory usage of the bitmasks $\mathcal{M}$.
The memory usage of PE-VBP increases along with the skew factor since the maximum code length increases with the skew factor.
For ByteSlice, even though the code length is 12-bit, one code must take up 16 bits under ByteSlice because of padding zeros, which will waste memory. 
However, we focus on scan and lookup performance in this paper and select ByteSlice as a candidate since it dominates scan and lookup performance on uniform to lightly skewed data columns.
As opposite to Figure~\ref{fig:memory_usage}(a), we fixed the skew factor as $1$ and varied the domain factor in Figure~\ref{fig:memory_usage}(b). 
PP-VBS uses the least memory when columns are skewed under varying domain factor.
\subsection{Real Data Evaluation}
This set of experiments aims to evaluate ByteStore
as a whole and compare it with homogenous storage engines that use only Bit-Packed, use only ByteSlice or only PE-VBP.

The experiments are done using 6 real datasets downloaded from Google BigQuery
in 2019 May \cite{bigquery}.
Table \ref{table:real_data} shows their details
as well as the results of dataset ingestion by ByteStore's column layout advisor.
It clearly shows that different columns in
a dataset require different storage layouts.
The column layout advisor does not use much time 
to do the profiling and layout selection. 
The offline data ingestion time is mainly spent on encoding the columns,
but half of that time is indispensable as the column has to be encoded in one of the two storage layouts anyway. 
To focus only on scans and lookups, we follow \cite{li2014widetable} to materialize the joins and execute the selection-projection components of the queries.
Same as \cite{byteslice}, we discard the queries that have no selection clause and queries that involve string similarity comparison {\tt LIKE}. 
In interest of space, the details of these queries are not presented, please refer to the main page of each dataset for more information.


\begin{figure}[]
\centering 
\subfigure[Execution Time Breakdown for Queries. Y-axis reports cycles per tuple]
{
\includegraphics[width=1.0\columnwidth]{./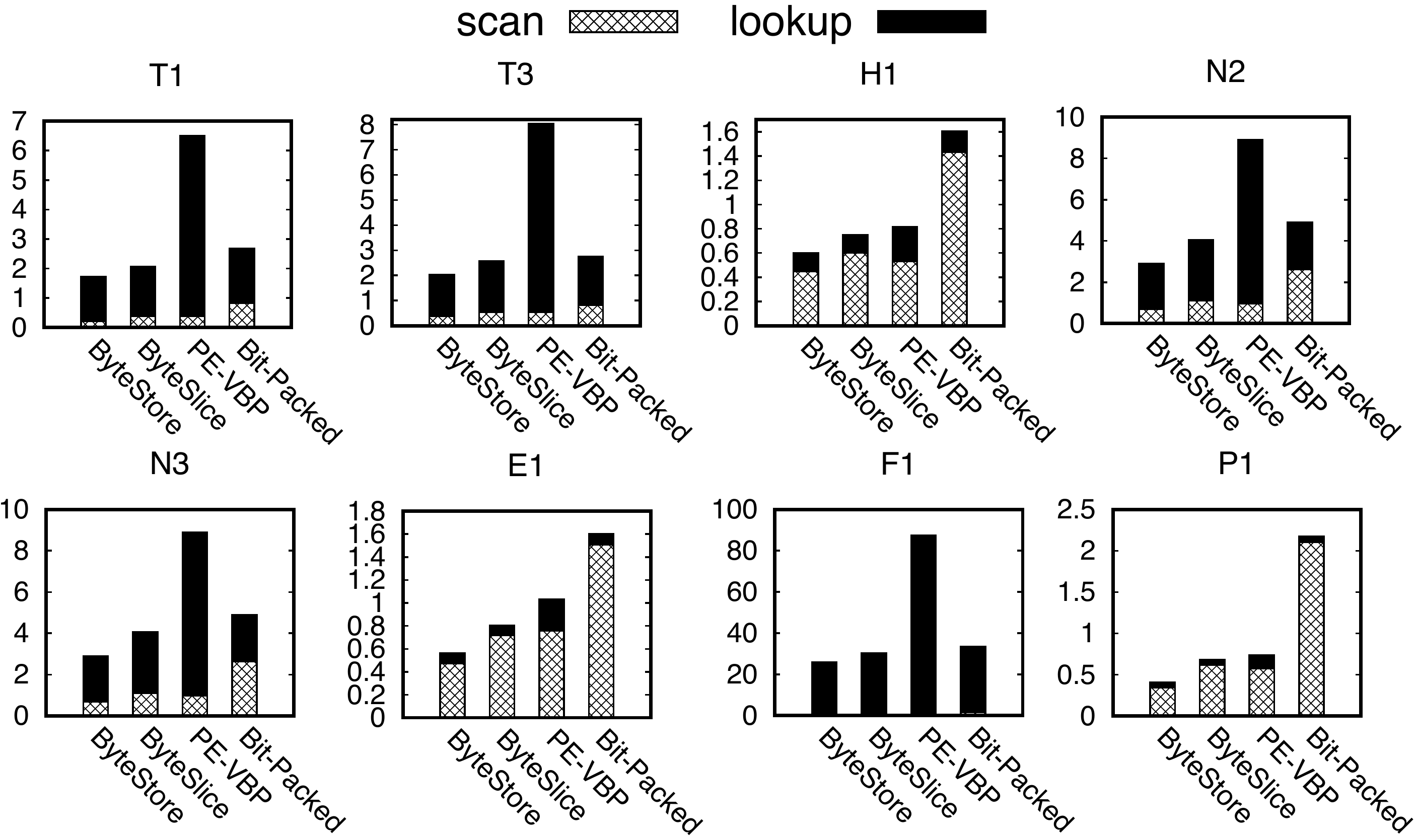}
}
\subfigure[Query Speed-up Over PE-VBP]
{
\includegraphics[width=1.0\columnwidth]{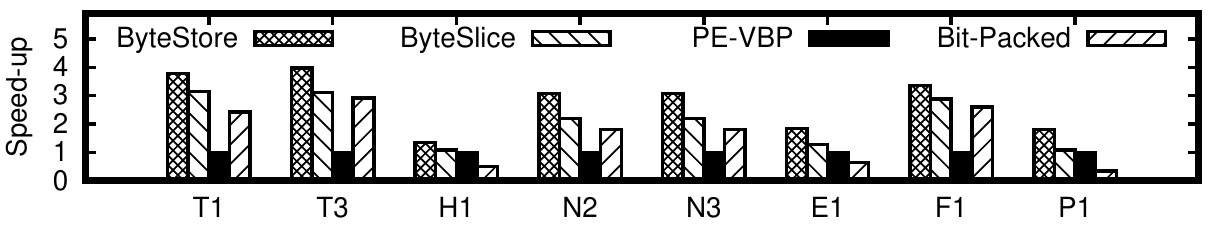}
}
\caption{
Performance Comparison of Different Layouts on Six Real Datasets: TAXITRIP(Queries T*), HEALTH (Queries H*), NYC(Queries N*), EDUCATION (Queries E*), FEC (Queries F*), NPPES (Queries P*)
}
\label{fig:breakdown}
\vspace{-0.4cm}
\end{figure}

Figure \ref{fig:breakdown} compares ByteStore with using ByteSlice only, using PE-VBP only, and using Bit-Packed only across different queries and different datasets.
In Figure~\ref{fig:breakdown}(a), we report the execution time breakdown of all queries. 
The run time of each query is dissected into scan cost and lookup cost. 
The reported numbers have been normalized on a per tuple basis. 
We could see there are both scan-dominant (e.g., E1 and P1) and lookup-dominant (e.g., F1) queries. 
We can see that ByteStore outperforms all homogeneous storage schemes.
Figure~\ref{fig:breakdown}(b) reports the query speed-up over PE-VBP only. 
Overall, ByteStore brings up to 4.0$\times$, 1.7$\times$ and 5.2$\times$ speedup to query performance
when comparing with using PE-VBP only, ByteSlice only and Bit-Packed only respectively.

\section{Related Work} \label{sec:related}

Since the beginning of 21st century, there have been several hybrid data engines developed for HTAP workloads.
Hybrid storage engines mainly focus on mixing the row and column representations 
\cite{hyrise2010,kemper2011hyper, featureMirrors,LeeMMFSPKG13, bridgingPalvo,datablock,oracle_database}.
Among them, fractured mirrors \cite{featureMirrors} advocates the maintenance of both NSM (row-oriented) and DSM (column-oriented) physical representations of the database simultaneously.
HYRISE \cite{hyrise2010} automatically partitions the tables into variable-length vertical segments based on how the attributes of each table are co-accessed by the queries. 
Based on the work from HYRISE, SAP developed HANA 
that starts out with a NSM layout, and then migrate to a compressed DSM storage manager \cite{LeeMMFSPKG13}.
Another DBMS that supports dual NSM/DSM storage like HANA is MemSQL \cite{memsql}. 
Like HANA, these storage layouts are managed by separate runtime components. 
Unlike HANA, MemSQL exposes different layout options to the application.
None of the above focus on pure column store. 
In this work, we focus on that and study the use of different storage layouts for different columns.

The experiment-driven approach \cite{experimentDriven} has been 
used to tune database systems \cite{itunes}, batch systems \cite{Badu2010}, and machine learning systems \cite{tftuner}.
In this paper, we also use the experiment-driven approach to design our column-layout-advisor.  
The advantage of experiment-driven approach is that its results are highly accurate 
at the cost of relatively longer tuning time.  Nonetheless, tuning is an offline process and the tuning time is not a crucial factor.

Encoding techniques have been extensively used for main memory analytical databases in both the research community \cite{abadi2006integrating,pe_10,krueger2011fast,li2013bitweaving}, and in the industry, e.g., SAP HANA~\cite{farber2012sap}.
Works listed above all use the fixed length encoding and do not leverage column skew.
IBM Blink \cite{blink} and its commercial successor IBM DB2 BLU \cite{DB2BLU} employ a proprietary encoding technique called frequency partitioning. 
Based on the frequency of data, a column is divided into multiple partitions, each of
which uses an independent fixed-length encoding. 
The code lengths of different column partitions are different.
Thus, that technique can be viewed as a hybrid between the fixed-length encoding and variable-length encoding. By contrast, PPE is a pure variable
length encoding scheme and it should work with a distinct storage layout VBS in tandem.


Lightweight indexes are techniques that skip data processing by using summary statistics over the base column. 
Such techniques include Zone Maps \cite{zone_maps}, Column Imprints \cite{imprint}, Feature Based Data Skipping \cite{feature_data_skipping}, Column Sketches \cite{columnSketches} and BinDex \cite{bindex}. 
For example, as a widely used technique, Zone Maps partition a column into zones and record the metadata of each zone, such as \emph{min} and \emph{max}. 
With data partitioning, the approaches skip zones where all values in the zone satisfy or not satisfy the predicate.
These lightweight indexing techniques are complementary with storage layouts
and can be used together.

\section{Conclusion} \label{sec:conclusion}

Choosing the optimal layout for individual columns in scan-based
OLAP systems is non-trivial because it must balance between scan
and lookup performance and account for the column data characteristics.
In this paper, 
we first presented a new layout, PP-VBS, that achieves both fast scan and 
fast lookup on skewed data. We then described ByteStore,
a hybrid column store
using an experiment-driven approach to select the best column 
layout for each individual column. Experiments on real and synthetic datasets and workloads
show that our hybrid column store significantly
outperforms previous works in terms of end-to-end query performance.



\bibliographystyle{IEEEtran}
\bibliography{IEEEabrv,0_references}

\end{document}